\newtheorem{theorem}{Theorem}
\newtheorem{lemma}{Lemma}
\newtheorem{corollary}{Corollary}
\theoremstyle{definition}
\newtheorem{definition}{Definition}
\theoremstyle{remark}
\newtheorem*{remark}{Remark}
\newcommand{\F}{\mathbb{F}}
\newcommand{\colvec}[1]{\begin{bmatrix}#1\end{bmatrix}} 
\DeclareMathOperator{\poly}{poly}
\DeclareMathOperator{\id}{\mathrm{I}}
\newcommand*\conc{\mathbin{\|}}
\newcommand{\bit}[1]{\langle{#1}\rangle}
\newcommand{\kunique}{There exists a randomized data structure that takes as input positive integers $\lambda$, $n$, $\kappa$ and initializes a function $\Gamma : \bit{n}^{2^{\lambda}} \to \bit{n + \kappa + 4}^{2^{\lambda + 3}}$.
In the word RAM model with word length $w$ the data structure satisfies the following:}
\newcommand{\family}{There exists a randomized data structure that takes as input positive integers $u$, $r = u^{O(1)}$, $k$, $t$ and selects a family of functions $\mathcal{F}$ from $[u]$ to $[r]$. 
In the word RAM model with word length $w$ the data structure satisfies the following:}
\begin{document}

\title{From Independence to Expansion and Back Again\footnote{A shorter version of this paper appeares in Proceedings of STOC 2015. The results in this version slightly improves those in the proceedings version for small space, see section~\ref{sec:close-to-k-improvement}.}} 

\author{
	Tobias Christiani\\
	\small \texttt{tobc@itu.dk}\\
	\small IT University of Copenhagen
	\and
	Rasmus Pagh\\
	\small \texttt{pagh@itu.dk}\\
	\small IT University of Copenhagen
	\and
	Mikkel Thorup\\
	\small \texttt{mikkel2thorup@gmail.com}\\
	\small University of Copenhagen
}


\maketitle

\begin{abstract}
\noindent
We consider the following fundamental problems:
	\begin{itemize}
		\item Constructing $k$-independent hash functions with a space-time tradeoff close to Siegel's lower bound.
		\item Constructing representations of unbalanced expander graphs having small size and allowing fast computation of the neighbor function.
	\end{itemize}
	It is not hard to show that these problems are intimately connected in the sense that a good solution to one of them leads to a good solution to the other one.
	In this paper we exploit this connection to present efficient, recursive constructions of $k$-independent hash functions (and hence expanders with a small representation).
	While the previously most efficient construction (Thorup, FOCS 2013) 
	needed time quasipolynomial in Siegel's lower bound, our time bound is
	just a logarithmic factor from
	the lower bound.
\end{abstract}



\section{Introduction}

\emph{‘Not all those who wander are lost.’} --- Bilbo Baggins.

\medskip

The problem of designing explicit unbalanced expander graphs with near-optimal parameters is of major importance in theoretical computer science.
In this paper we consider bipartite graphs with edge set $E \subset U \times V$ where ${|U| \gg |V|}$.
Vertices in $U$ have degree $d$ and expansion is desired for subsets $S \subset U$ with $|S| \leq k$ for some parameter~$k$.
Such expanders have numerous applications (e.g.~hashing~\cite{siegel2004}, routing~\cite{broder1999}, sparse recovery~\cite{indyk2010}, membership~\cite{buhrman2002}), 
yet coming up with explicit constructions that have close to optimal parameters has proved elusive.
At the same time it is easy to show that choosing $E$ at random will give a graph with essentially optimal parameters.
This means that we can efficiently and with a low probability of error produce a description of an optimal unbalanced expander that takes space proportional to $|U|$.
Storing a complete description is excessive for most applications that, provided access to an explicit construction, would use space proportional to $|V|$.    
On the other hand, explicit constructions can be represented using constant space, 
but the current best explicit constructions have parameters $d$ and $|V|$ that are polynomial in the optimal parameters of the probabilistic constructions~\cite{guruswami2009}.
Furthermore, existing explicit constructions have primarily aimed at optimizing the parameters of the expander, with the evaluation time of the neighbor function being of secondary interest, as long as it can be bounded by $\poly \log u$.
This evaluation time is excessive in applications that, provided access to the neighbor function of an optimal expander, would use time proportional to $d$, where $d$ is typically constant or at most logarithmic in $|U|$. 

In this paper we focus on optimizing the parameters of the expander while minimizing the space usage of the representation and the evaluation time of the neighbor function.
We present randomized constructions of unbalanced expanders in the standard word RAM model.
Our constructions have near-optimal parameters, use space close to $|V|$, and support computing the $d$ neighbors of a vertex in time close to~$d$.

\paragraph{Hash functions and expander graphs} There is a close connection between $k$-independent hash functions and expanders. 
A $k$-independent function with appropriate parameters will, with some probability of failure, represent the neighbor function of a graph that expands on subsets of size $k$.
This is what we refer to as going from independence to expansion, and the fact follows from the standard union bound analysis of probabilistic constructions of expanders.
Going in the other direction, from expansion to independence, was first used by Siegel~\cite{siegel2004} as a technique for showing the existence of $k$-independent hash functions with evaluation time that does not depend on $k$.
We follow in Siegel's footsteps and a long line of work (see e.g.~\cite{dietzfelbinger2012} for an overview) that focuses on the space-time tradeoff of $k$-independent hash functions over a universe of size $u = |U|$.

Ideally, we would like to construct a data structure in the word RAM model that takes as input parameters $u$, $k$, and~$t$, and returns a $k$-independent hash function over $U$.
The hash function should use space $k(u/k)^{1/t}$ and have evaluation time $O(t)$, matching up to constant factors the space-time tradeoff of Siegel's cell probe lower bound for $k$-independent hashing~\cite{siegel2004}.
We present the first construction that comes close to matching the space-time tradeoff of the cell probe lower bound.

\paragraph{Method} Our work is inspired by Siegel's graph powering approach \cite{siegel2004} and by recent advances in tabulation hashing~\cite{thorup2013}, 
showing that it is possible to efficiently describe expanders in space much smaller than $u$.
Our main insight is that it is possible to make simple, recursive expander constructions by alternating between strong unbalanced expanders and highly random hash functions.
Similarly to previous work, we follow the procedure of letting a $k$-independent function represent a bipartite graph $\Gamma$ that expands on subsets of size $k$. 
We then apply a graph product to $\Gamma$ in order to increase the size of the universe covered by the graph while retaining expander properties.
At each step of the recursion we return to $k$-independence by combining the graph product with a table of random bits, leaving us with a new $k$-independent function that covers a larger universe.
By combining the technique of alternating between expansion and independence with a new and more efficient graph product, we can improve upon existing randomized constructions of unbalanced expanders.    

\subsection{Our contribution}
Table~\ref{tab:results} compares previous upper and lower bounds on $k$-independent hashing with our results, as presented in Corollaries 1, 2, and 3.
As can be seen, most results present a trade-off between time and space controlled by a parameter $t$.
Tight lower and upper bounds have been known only in the cell probe model,
but our new construction nearly matches the cell probe lower bound by Siegel~\cite{siegel2004}.

The time bound for the construction using explicit expanders~\cite{guruswami2009} uses the degree of the expander as a conservative lower bound, 
based on the possibility that the neighbor function in their construction can be evaluated in constant time in the word RAM model.
The time bound that follows directly from their work is $\poly \log u$. 
While the constant factors in the exponent of the space usage of~\cite{siegel2004,thorup2013} have likely not been optimized, their techniques do not seem to be able to yield space close to the cell probe lower bound. 

As can be seen our construction polynomially improves either space or time compared to each of the previously best trade-offs.
We also find our construction easier to describe and analyze than the results of~\cite{guruswami2009,kedlaya2008,thorup2013}, with simplicity comparable to that of Siegel's influential paper~\cite{siegel2004}. 

\begin{table}[t]
	
	\centering
	\renewcommand\arraystretch{1.5}
	\caption{Space-time tradeoffs for $k$-independent hash functions}
	\small
    \begin{tabular}{|c|c|c|} \hline
	{\bf Reference}  &  {\bf Space}   &  {\bf Time}  \\ \hline\hline
	Polynomials~\cite{joffe1974,wegman1981}  & $k$ & $O(k)$ \\ \hline
	Preprocessed polynomials~\cite{kedlaya2008} & $k^{1 + \varepsilon}(\log u)^{1 + o(1)}$ & $(\poly \log k) (\log u)^{1 + o(1)}$ \\ \hline
	Expanders~\cite{guruswami2009} + \cite{siegel2004} & $k^{1+ \varepsilon}d^{2}$ &  $d = O(\log(u)\log(k))^{1 + 1/\varepsilon}$ \\ \hline
    Expander powering \cite{siegel2004} & $k^{(1 - \varepsilon)t}u^{\varepsilon} + u^{1/t}$ & $O(1/\varepsilon)^{t}$ \\ \hline
    Double tabulation \cite{thorup2013} & $k^{5t} + u^{1/t}$ & $O(t)$ \\ \hline
    Recursive tabulation \cite{thorup2013} & $\poly k + u^{1/t}$ & $O(t^{\log t})$  \\ \hline 
    Corollary 1                              & $ku^{1/t}t^{3}$ & $O(t^{2} + t^{3}\log (k) / \log(u))$  \\ \hline 
    Corollary 2                              & $k^{2}u^{1/t}t^{2}$ & $O(t \log t + t^{2} \log (k) /  \log (u))$  \\ \hline
    Corollary 3*    & $ku^{1/t}t$ & $O(t \log t)$  \\ 
	\hline\hline
	    Cell probe lower bound~\cite{siegel2004} & $k(u/k)^{1/t}$            & $t < k$ \text{ probes}      \\ \hline
	    Cell probe upper bound~\cite{siegel2004}  &  $k(u/k)^{1/t}t$            & $O(t)$ \text{ probes}   \\
	\hline
    \end{tabular}

\begin{minipage}{0.83\textwidth} 
\footnotesize
\vspace{0.6em}
Table notes: Space-time tradeoffs for $k$-independent hash functions from a domain of size $u$, with the trade-off controlled by a parameter $t$. 
Time bounds in the last two rows are number of cell probes, and remaining rows refer to the word RAM model with word size $\Theta(\log u)$.
Leading constants in the space bounds are omitted.
We use $t$ to denote an arbitrary positive integer parameter that controls the trade-off, and 
We use $\varepsilon$ to denote an arbitrary positive constant.
*Corollary 3 relies on the assumption $k = u^{O(1/t)}$.
\end{minipage}
\label{tab:results}
\end{table}

Like all other randomized constructions our data structures comes with an error probability, 
but this error probability is \emph{universal} in the sense that if the construction works then it provides independent hash values on \emph{every subset} of at most $k$ elements from $U$.
This is in contrast to other known constructions~\cite{dietzfelbinger2003,pagh2008} that give independence with high probability on each \emph{particular} set of at most $k$ elements, 
but will fail almost surely if independence for a superpolynomial number of subsets is needed.

\paragraph{Applications} 
Efficient constructions of highly random functions is of fundamental interest with many applications in computer science.
A $k$-independent function can, without changing the analysis, replace a fully random function in applications that only rely on $k$-subsets of inputs mapping to random values.
We can therefore view $k$-independent functions as space and randomness efficient alternatives to fully random functions, capable of providing compact representations of complex structures such as expander graphs over very large domains.
Apart from the construction of expander graphs with a small description, as an example application, $k$-independent functions with a universal error probability can be used to construct ``real-time'' dictionaries that are able to handle extremely long (in expectation) sequences of insertion and deletion operations in constant time per operation before failing.

Let $\tau > 1$ be a constant parameter.
We use a $k$-independent hash function with $k=w^{O(\tau)}$ to split a set of $n$ machine words of $w$ bits into $O(n)$ subsets such that each subset has size at most $k$, with probability at least $1 - 2^{-w^\tau}$.
Handling each subset with Thorup's recent construction of dictionaries for sets of size $w^{O(\tau)}$ using time $O(\tau)$ per operation~\cite{thorup2014} we get a dynamic dictionary in which, with high probability, every operation in a sequence of length $\ell < 2^{O(w^\tau)}$ takes constant time.
In comparison the hash functions of~\cite{DM90,dietzfelbinger2003,pagh2008} can only guarantee that sequences of length $\ell <\text{poly}(n)$ operations, where $n < 2^w$, succeed with high probability.
The splitting hash function needs space $u^{\Omega(1)}$, which might exceed the space usage of an individual dictionary, but this can be seen as a shared resource that is used for many dictionaries (in which case we bound the total number of operations before failure).

\section{Background and overview}
In the analysis of randomized algorithms we often assume access to a fully random function of the form~${f : [u] \to [r]}$ where $[n]$ denotes the set $\{0,1, \dots, n-1\}$.
To represent such a function we need a table with $u$ entries of $\log r$ bits.
This is impractical in applications such as hashing based dictionaries where we typically have that $u \gg r$ and the goal is to use space $O(r)$ to store $r$ elements of $[u]$.
Fortunately, the analysis that establishes the performance guarantees of a randomized algorithm can often be modified to work even in the case where the function $f$ has weaker randomness properties.

One such concept of limited randomness is $k$-independence, first introduced to computer science in the 1970s through the work of Carter and Wegman on universal hashing \cite{carter1977}.  
A family of functions from $[u]$ to $[r]$ is $k$-independent if, for every subset of $[u]$ of cardinality at most $k$, 
the output of a random function from the family evaluated on the subset is independent and uniformly distributed in $[r]$.
Trivially, the family of all functions from $[u]$ to $[r]$ is $k$-independent, but representing a random function from this family uses too much space.
It was shown in \cite{joffe1974} that for every finite field $\F$ the family of functions that consist of all polynomials over $\F$ of degree at most $k-1$ is $k$-independent.
A function from this family can be represented using near-optimal space \cite{chor1985} by storing the $k$ coefficients of the polynomial.
The mapping defined by a function $f$ from a $k$-independent polynomial family over $\F = \{x_{1}, x_{2}, \dots, x_{u}\}$ takes the form 
\begin{equation}
\colvec{f(x_{1}) \\ f(x_{2}) \\ \vdots \\ f(x_{u})} = 
\colvec{ 
	x_{1}^{0} & x_{1}^{1} & \dots & x_{1}^{k-1} \\
	x_{2}^{0} & x_{2}^{1} & \dots & x_{2}^{k-1} \\
	\vdots    & \vdots    & \vdots & \vdots \\
	x_{u}^{0} & x_{u}^{1} & \dots & x_{u}^{k-1} \\
}
\colvec{a_{0} \\ a_{1} \\ \vdots \\ a_{k-1}}. \label{eq:vandermonde}
\end{equation}
The $k$-independence of the polynomial family follows from properties of the Vandermonde matrix: every subset of $k$ rows is linearly independent. 
The problem with this construction is that the Vandermonde matrix is dense, resulting in an evaluation time of $\Omega(k)$ if we simply store the coefficients of the polynomial.
The lower bounds by Siegel \cite{siegel2004}, and later Larsen \cite{larsen2012}, as presented in Table 1, show that a data structure for evaluating a polynomial of degree $k-1$ using time $t < k$ must use space at least $k(u/k)^{1/t}$.   
The data structure of~\cite{kedlaya2008} presents a step in this direction, but is still far from the lower bound for $k$-independent functions.

The quest for $k$-independent families of functions with evaluation time $t < k$ can be viewed as attempts to construct compact representations of sparse matrices that fill the same role as the Vandermonde matrix.
We are interested in compact representations that support fast computation of the sparse row associated with an element $x \in [u]$. 
An example of a sparse matrix with these properties is the adjacency matrix of a bipartite expander graph with sufficiently strong expansion properties.
For the purposes of constructing $k$-independent hash functions we are primarily interested in expanders that are highly unbalanced.

\paragraph{Expander hashing} Prior constructions of fast and highly random hash functions has followed Siegel's approach of combining expander graphs with tables of random words.
If $\Gamma$ is a $k$-unique expander graph (see Definition \ref{def:k-unique}) then we can construct a $k$-independent function by composing it with a simple tabulation function $h$.
This approach would yield optimal $k$-independent hash functions if we had access to explicit expanders with optimal parameters that could be evaluated in time proportional to the left outdegree.
Unfortunately, no explicit construction of a $k$-unique expander with optimal parameters is known.

Siegel~\cite{siegel2004} addresses this problem by storing a smaller randomly generated $k$-unique expander, say, one that covers a universe of size $u^{1/t}$.
By the $k$-independent hashing lower bound, if an expander with $|U| = u^{1/t}$ has degree $d$, then in order for it to be $k$-unique it must have a right hand side of size $|V| \geq k(u^{1/t}/k)^{1/d}$.
To give a space efficient construction of a $k$-unique expander that covers a universe of size $u$, Siegel repeatedly applies the Cartesian product to the graph.
Applying the Cartesian product $t$ times to a $k$-unique expander results in a graph that remains $k$-unique but with the left degree and size of the left and right vertex sets raised to the power $t$.
Using space $u^{1/t}$ to store an expander with degree $t$, it follows from the lower bound that the expander resulting from repeatedly applying the Cartesian product must have
\begin{equation*}
|V'| \geq (k(u^{1/t}/k)^{1/d})^{t} = k^{(1 - 1/d)t}u^{1/d}.
\end{equation*}
Setting $d = 1/\varepsilon$, the randomly generated $k$-unique expander that forms the basis of the construction has degree $O(1/\epsilon)$, leading to the expression in Table~1.
Since we need to store $|V'|$ random words in a table in order to create a $k$-independent hash function, 
Siegel's graph powering approach offers a space-time tradeoff that is far from the lower bound from our perspective where both $u$, $k$, and $t$ are parameters to the hash function.

Thorup~\cite{thorup2013} shows that, for the right choice of parameters, a simple tabulation hash function is likely to form a compact representation of a $k$-unique expander.
A simple tabulation function takes a string $x = (x_{1}, x_{2}, \dots, x_{c})$ of $c$ characters from some input alphabet $\bit{n} = \{0,1\}^{n}$, and returns a string of $d$ characters from some output alphabet $\bit{m} = \{0,1\}^{m}$.
The simple tabulation function $h : \bit{n}^{c} \to \bit{m}^{d}$ is evaluated by taking the exclusive-or of $c$ table-lookups 
\begin{equation*}
h(x) = h_{1}(x_{1}) \oplus h_{2}(x_{2}) \oplus \dots \oplus h_{c}(x_{c})
\end{equation*}
where $h_{i} : \bit{n} \to \bit{m}^{d}$ is a random function.
The advantage of a simple tabulation function compared to a fully random function is that we only need to store the random character tables $h_{1}, h_{2}, \dots, h_{c}$.
Thorup is able to show that for~${d \geq 6c}$ a simple tabulation function is $k$-unique with a low probability of failure when $k \leq (2^m)^{1/5c}$. 
Setting $n = m$ and composing the $k$-unique expander resulting from a single application of simple tabulation with another simple tabulation function, 
Thorup first constructs a hash function with space usage~$u^{1/c}$, independence~$u^{\Omega(1/c^2)}$, and evaluation time~$O(c)$. 
He then presents a second trade-off with space~$u^{1/c}$, independence~$u^{\Omega(1/c)}$, and time~$O(c^{\log c})$ that comes from applying simple tabulation recursively to the output of a simple tabulation function. 
Similar to Siegel's upper bound, the space usage of Thorup's upper bounds with respect to $k$ is much larger than the lower bound as can be seen from Table~\ref{tab:results} where the space-time tradeoff of his results have been parameterized in terms of the independence $k$.\footnote{ 
It should be noted that Thorup's analysis is not tuned to optimize the polynomial dependence on $k$, and that he gives stronger concrete parameters for some realistic parameter settings.}

\paragraph{Explicit constructions}
The literature on explicit constructions has mostly focused on optimizing the parameters of the expander, with the evaluation time of the neighbor function being of secondary interest, as long as it is bounded by $\poly \log u$.
As can be seen from Siegel's cell probe lower and upper bounds, optimal constructions of $k$-independent hash functions have evaluation time in the range $t = 1$ to $t = \log u$.
Therefore, an explicit construction, even if we had one with optimal parameters, would without further guarantees on the running time not be enough to solve our problem of constructing efficient expanders. 
Here we briefly review the construction given by Guruswami et al. \cite{guruswami2009}.
It is, to our knowledge, currently the best explicit construction of unbalanced bipartite expanders in terms of the parameters of the graph.
Their construction and its analysis is, similarly to the polynomial hash function in equation \eqref{eq:vandermonde}, algebraic in nature and inspired by techniques from coding theory, 
in particular Parvaresh-Vardy codes and related list-decoding algorithms~\cite{parvaresh2005}.
In their construction, a vertex $x$ is identified with its Reed-Solomon message polynomial over a finite field $\F$.
The $i$th neighbor of $x$ is found by taking a sequence of powers of the message polynomial over an extension field, evaluating each of the resulting polynomials in the $i$th element of $\F$, and concatenating the output.
In contrast, the constructions presented in this paper only use the subset of standard word RAM instructions that can be implemented in $AC^{0}$.
In Table 1 we have assumed that we can evaluate their neighbor function in constant time as a conservative lower bound on the performance of their construction in the word RAM model. 
Other highly unbalanced explicit constructions given in \cite{capalbo2002, tashma2007} offer a tradeoff where either one of $d$ or $|V|$ is quasipolynomial in the lower bound.
In comparison, the construction by Guruswami et al.~is polynomial in both of these parameters. 
\section{Our constructions}
In this section we present three randomized constructions of efficient expanders in the word RAM model.
Each construction offers a different tradeoff between space, time, and the probability of failure.
We present our constructions as data structures, with the randomness generated by the model during an initialization phase.
The initialization time of our data structures is always bounded by their space usage, and to simplify the exposition we therefore only state the latter.
Alternatively, our constructions could be viewed directly as randomized algorithms, taking as input a list of parameters, a random seed, and a vertex $x \in [u]$ and returning the list of neighbors of $x$.
The hashing corollaries presented in Table~1 follow directly from our three main theorems using Siegel's expander hashing technique.
\subsection{Model of computation}
The algorithms presented in this paper are analyzed in the standard word RAM model with word size $w$ as defined by Hagerup~\cite{hagerup1998}, 
modeling what can be implemented in a standard programming language like {\tt C}~\cite{kernighan1988}.
In order to show how our algorithms benefit from word-level parallelism we use~$w$ as a parameter in the analysis.
To simplify the exposition we impose the natural restriction that, for a given choice of parameters to a data structure, the word size is large enough to address the space used by the data structure.
In other words, our results are stated with $w$ as an unrestricted parameter, but are only valid when we actually have random access in constant time.

The data structures we present require access to a source of randomness in order to initialize the character tables of simple tabulation functions.
To accomodate this we augment the model with an instruction that uses constant time to generate a uniformly random and independent integer in $[r]$ where $r \leq 2^{w}$.
We note that our constructions use only the subset of arithmetic instructions required for evaluating a simple tabulation function, i.e, standard bit manipulation instructions, integer addition, and subtraction.
Our results therefore hold in a version of the word RAM model that only uses instructions that can be implemented in $AC^{0}$, known in the literature as the restricted model~\cite{hagerup1998} or the Practical RAM~\cite{miltersen1996}. 
\subsection{Notation and definitions}
Let $\bit{n} = \{0,1\}^{n}$ denote the alphabet of $n$-bit strings, and let $x = (x_{1}, x_{2}, \dots, x_{c}) \in \bit{n}^{c}$ denote a string of $n$-bit characters of length $c$.
We define a concatenation operator $\conc$ that takes as input two characters $x \in \bit{n}$ and $y \in \bit{m}$, and concatenates them to form $x \conc y \in \bit{n + m}$.
The concatenation operator can also be applied to strings of equal length where it performs component-wise concatenation.
Given strings $x \in \bit{n}^{c}$ and $y \in \bit{m}^{c}$ the concatenation $x \conc y$ is an element of $\bit{n+m}^{c}$ with the $i$th component of $x \conc y$ defined by $(x \conc y)_{i} = x_{i} \conc y_{i}$.
We also define a prefix operator. 
Given $x \in \bit{n}$ and a positive integer $m$, in the case where $m \leq n$ we use $x[m] \in \bit{m}$ to denote the $m$-bit prefix of $x$.
In the case where $m > n$ we pad the prefix such that $x[m] \in \bit{m}$ denotes $x[n] \conc 0^{m-n}$ where $0^{m-n}$ is the character consisting of a string of $m-n$ bits all set to $0$.

We will present word RAM data structures that represent functions of the form~${\Gamma : \bit{n}^{c} \to \bit{m}^{d}}$.
The function $\Gamma$ defines a $d$-regular bipartite graph with input set $\bit{n}^{c}$ and output set $\{1, 2, \dots, d\} \times \bit{m}$.
For $S \subseteq \bit{n}^{c}$ we overload $\Gamma$ and define $\Gamma(S) = \{ (i, \Gamma(x)_{i}) \mid x \in S \}$, i.e., $\Gamma(S)$ is the set of outputs of $S$.
We are interested in constructing functions where every subset $S$ of inputs of size at most $k$ contains an input that has many unique neighbors, formally:
\begin{definition} \label{def:k-unique}
Let $\Gamma : \bit{n}^{c} \to \bit{m}^{d}$ be a function satisfying the following property:   
\begin{equation*}
\forall S \subseteq \bit{n}^{c}, |S| \leq k, \exists x \in S : |\Gamma(\{x\}) \setminus \Gamma(S \setminus \{ x \})| > l.
\end{equation*}
Then, for $l \geq 0$ we say that $\Gamma$ is \emph{$k$-unique}. 
If further $l \geq d/2$ we say that $\Gamma$ is \emph{$k$-majority-unique}.
\end{definition}
For completeness we define the concept of $k$-independence:
\begin{definition} \label{def:kindependence}
Let $k$ be a positive integer and let $\mathcal{F}$ be a family of functions from $U$ to $R$.
We say that~$\mathcal{F}$ is a \mbox{\emph{$k$-independent}} family of functions if,
for every choice of~${l \leq k}$ distinct keys~${x_{1}, \dots, x_{l}}$ and arbitrary values~${y_{1}, \dots, y_{l}}$, 
then, for $f$ selected uniformly at random from $\mathcal{F}$ we have that 
\begin{equation*}
\Pr[f(x_{1}) = y_{1} \land f(x_{2}) = y_{2} \land \dots \land f(x_{k}) = y_{k}] = |R|^{-k}.
\end{equation*}
\end{definition}
Simple tabulation functions are an important tool in our constructions. 
Our data structures can be made to consist entirely of simple tabulation functions and our evaluation algorithms can be viewed as a sequence of adaptive calls to this collection of simple tabulation functions.  
\begin{definition}
Let $(R, \oplus)$ denote an abelian group.
A \emph{simple tabulation function} $h : \bit{n}^{c} \to R$ is defined by
\begin{equation*}
h(x) = \bigoplus_{i=1}^{c}h_{i}(x_{i}) 
\end{equation*}
where each \emph{character table} $h_{i} : \bit{n} \to R$ is a $k$-independent function. 
\end{definition}
In this paper we consider simple tabulation functions with character tables that operate either on bit strings under the exclusive-or operation, $R = (\bit{m}, \oplus)$, 
or on sets of non-negative integers modulo some integer $r$, $R = ([r], +)$.
\subsection{From k-uniqueness to k-independence}
In his seminal paper Siegel~\cite{siegel2004} showed how a $k$-unique function can be combined with a table of random elements in order to define a $k$-independent family of functions.
In his paper on the expansion properties of tabulation hash functions, 
Thorup~\cite[Lemma 2]{thorup2013} used a slight variation of Siegel's technique that makes use of the position-sensitive structure of the bipartite graph defined by $\Gamma : \bit{n}^{c} \to \bit{m}^{d}$.
This is the version we state here.
\begin{lemma}[Siegel {\cite{siegel2004}}, Thorup {\cite{thorup2013}}] \label{lem:expanderhashing}
Let $\Gamma : \bit{n}^{c} \to \bit{m}^{d}$ be $k$-unique and let $h : \bit{m}^{d} \to R$ be a simple tabulation function.
Then $h \circ \Gamma$ defines a family of $k$-independent functions.
We sample a function from the family by sampling the character tables of $h$.
\end{lemma}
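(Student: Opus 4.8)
The plan is to show that for any $l \le k$ distinct keys $x_1,\dots,x_l \in \bit{n}^c$ and any target values $y_1,\dots,y_l \in R$, the probability (over the random character tables of $h$) that $h(\Gamma(x_j)) = y_j$ for all $j$ is exactly $|R|^{-l}$. Since each $h_i$ is itself $k$-independent (in fact we may take the character tables fully random on the relevant inputs, as only $\le k$ keys are queried), it suffices to expose the random table entries one coordinate at a time and argue that at each step a fresh, uniformly random group element is introduced that "frees up" one more key's hash value. The key structural input is $k$-uniqueness of $\Gamma$: it guarantees that inside any set $S$ of $\le k$ keys there is some $x \in S$ with strictly more than $l \ge 0$ neighbors not shared with $S \setminus \{x\}$, i.e.\ at least one \emph{private} neighbor $(i,\Gamma(x)_i)$ that no other key in $S$ touches.

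First I would set up an induction on $|S|$, where $S = \{x_1,\dots,x_l\}$. For $|S| = 0$ the claim is vacuous. For the inductive step, apply Definition~\ref{def:k-unique} to $S$ to obtain a key $x_m \in S$ possessing a private neighbor: a position $i \in \{1,\dots,d\}$ such that the output character $\Gamma(x_m)_i \in \bit{m}$ does not occur as the $i$th output character of any other $x_j \in S$. Now write $h = \bigoplus_{i'=1}^{d} h_{i'}$ as a simple tabulation function on $\bit{m}^d$, and condition on \emph{all} table entries of $h$ except the single entry $h_i(\Gamma(x_m)_i)$. Under this conditioning, $h(\Gamma(x_j))$ for every $j \ne m$ is already determined (it does not depend on the one unexposed entry, precisely because $i$ is a private position for $x_m$), whereas $h(\Gamma(x_m)) = h_i(\Gamma(x_m)_i) \oplus (\text{determined stuff})$ is a uniformly random element of $R$ independent of everything else, since $h_i(\Gamma(x_m)_i)$ is uniform in $R$ (using that $h_i$ is $k$-independent and $\Gamma(x_m)_i$ is one of at most $k$ distinct arguments fed to $h_i$ among the keys in $S$). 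Hence
\begin{equation*}
\Pr\!\left[\bigwedge_{j=1}^{l} h(\Gamma(x_j)) = y_j\right]
= \Pr\!\left[\bigwedge_{j \ne m} h(\Gamma(x_j)) = y_j\right] \cdot |R|^{-1},
\end{equation*}
and the first factor equals $|R|^{-(l-1)}$ by the induction hypothesis applied to $S \setminus \{x_m\}$, which still has size $\le k$.

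The one point that needs care — and which I expect to be the main (though minor) obstacle — is the conditioning argument: I must make sure that "conditioning on all entries of $h$ except $h_i(\Gamma(x_m)_i)$" genuinely leaves that entry uniform and independent, which requires that the character tables $h_{i'}$ are mutually independent and each $k$-independent, and that within position $i$ the at most $k$ arguments $\{\Gamma(x_j)_i : j\}$ include $\Gamma(x_m)_i$ as a distinct value, so its table entry is among a $k$-independent (hence uniform) collection. A clean way to finesse this is to observe that, since only the $\le k$ keys in $S$ are ever evaluated, we may without loss of generality replace each $k$-independent $h_{i'}$ by a fully random function on the finitely many points it is queried at; then all the independence is literal and the step is immediate. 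Everything else is bookkeeping, and the induction closes.
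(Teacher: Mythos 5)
Your proof is correct: the unique-neighbor/private-cell exposure argument (induct on $|S|$, use $k$-uniqueness to find a key $x_m$ with a private output pair $(i,\Gamma(x_m)_i)$, note that the corresponding table entry is uniform and untouched by the other keys, and peel it off), together with your observation that the $\le k$ queried entries of each $k$-independent character table may be treated as fully random, is exactly the standard argument of Siegel and Thorup that the paper cites for this lemma without reproving it.
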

\subsection{From k-independence to k-uniqueness}
A $k$-independent function has the same properties as a fully random function when considering $k$-subsets of inputs.
Randomized constructions of $k$-unique functions only need to consider $k$-subsets of inputs. 
We can therefore use the standard analysis of randomized constructions of bipartite expanders to show that, for the right choice of parameters, a $k$-independent function is likely to be $k$-unique.
For completeness we provide an analysis here.
In our exposition it will be convenient parameterize the $k$-uniqueness or $k$-majority-uniqueness of our constructions in terms of a positive integer~$\kappa$ such that $k = 2^{\kappa}$.
\begin{lemma} \label{lem:expanderparameters}
For every choice of positive integers $c$, $n$, $\kappa$ let~$\Gamma : \bit{n}^{c} \to \bit{m}^{d}$ be a $2^{\kappa}$-independent function.
Then,
\begin{itemize}
\item[--] for $m \geq n + \kappa + 1$ and $d \geq 4c$ we have that $\Gamma$ is $2^{\kappa}$-unique with probability at least $1 - 2^{-dn/2}$.
\item[--] for $m \geq n + \kappa + 4$ and $d \geq 8c$ we have that $\Gamma$ is $2^{\kappa}$-majority-unique with probability at least $1 - 2^{-dn/4}$.
\end{itemize}
\end{lemma}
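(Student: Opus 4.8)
The plan is to run the standard first-moment/union-bound argument for random bipartite expanders, but phrased so that only $k$-independence of $\Gamma$ (with $k=2^\kappa$) is used. Fix a set $S\subseteq\bit{n}^{c}$ with $|S|\le k$. For $\Gamma$ to fail to be $k$-unique on $S$, \emph{every} $x\in S$ must have at most $l$ unique neighbors, i.e. $|\Gamma(\{x\})\setminus\Gamma(S\setminus\{x\})|\le l$. The trick, going back to Siegel and used by Thorup, is to extract from such a ``bad'' $S$ a witness in the form of a small list of ``collision events'': order $S$ as $x_1,\dots,x_s$, and for each $x_j$ record the at least $d-l$ coordinates $i$ on which $\Gamma(x_j)_i$ coincides with $\Gamma(x_{j'})_i$ for some earlier $j'<j$ (for $x_1$ there are none, which is why it helps to think of the element achieving the max — but for the union bound we just need that summing $d-l$ over $j=2,\dots,s$ gives $\ge (s-1)(d-l)$ forced collisions). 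Because $\Gamma$ is $k$-independent and $s\le k$, the values $\{\Gamma(x_j)_i\}$ over all $x_j\in S$ and all $d$ coordinates are fully independent and uniform in $\bit{m}$, so each prescribed collision $\Gamma(x_j)_i=\Gamma(x_{j'})_i$ with a \emph{specified} earlier partner holds with probability $2^{-m}$ independently.

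Concretely, I would bound the failure probability by a sum over $s=2,\dots,k$, over the choice of the $s$ elements of $S$ (at most $(2^{nc})^{s}\le 2^{ncs}$ ways, or more cheaply $2^{nc s}$), over the choice, for each forced collision, of which coordinate and which earlier element it collides with (at most $(d\cdot s)$ per collision, and there are $(s-1)(d-l)$ of them, giving at most $(ds)^{(s-1)(d-l)}$ choices). The probability that all these prescribed collisions occur is $2^{-m(s-1)(d-l)}$. So
\begin{equation*}
\Pr[\Gamma \text{ not } k\text{-unique}] \le \sum_{s=2}^{k} 2^{ncs}\,(ds)^{(s-1)(d-l)}\,2^{-m(s-1)(d-l)}.
\end{equation*}
Now I substitute the regime-specific parameters. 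For $k$-uniqueness take $l=0$, $d\ge 4c$, $m\ge n+\kappa+1$: then each factor $2^{ncs}(ds)^{(s-1)d}2^{-m(s-1)d}$ should be controlled by writing $nc s \le (n c)\cdot s$ and comparing against $m(s-1)d \ge (n+\kappa+1)(s-1)(4c)$; using $s\le 2^\kappa$ so that $\log_2(ds)\le \kappa + \log_2 d$ is dominated by the extra slack from $m\ge n+\kappa+1$ and $d\ge 4c$, the exponent comes out negative with a margin of at least $dn/2$, i.e. the whole sum is at most $2^{-dn/2}$. For $k$-majority-uniqueness take $l=d/2$, so $d-l=d/2$, and use the stronger hypotheses $d\ge 8c$, $m\ge n+\kappa+4$; the same arithmetic with the halved number of forced collisions but the extra factor-$2$ in $d$ and the extra $+3$ in $m$ again closes, yielding $2^{-dn/4}$.

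The main obstacle is purely bookkeeping: getting the counting of the witness structure tight enough that the geometric sum over $s$ converges to the claimed clean bounds $2^{-dn/2}$ and $2^{-dn/4}$ rather than something weaker. In particular one must be careful that (i) the ``$(ds)^{(s-1)(d-l)}$ choices'' overcount but are still small compared to the $2^{-m(s-1)(d-l)}$ factor once $m$ exceeds $n+\kappa$ by the stated constant, and (ii) the $2^{ncs}$ term for choosing $S$ is absorbed — this is where $d\ge 4c$ (resp.\ $8c$) is used, since the exponent of the probability scales like $(s-1)\cdot d\cdot m \ge (s-1)\cdot 4c\cdot(n+\kappa+1)$ which beats $ncs$ with room to spare. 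I would organize the calculation by isolating the per-$s$ summand, taking logarithms, and showing the log is at most $-dn/2 - (s-2)\cdot(\text{something positive})$ so the sum over $s\ge 2$ is a geometric series bounded by twice its first term, then checking the first term ($s=2$) already gives $2^{-dn/2}$ (resp.\ $2^{-dn/4}$). Edge cases $k=1$ (vacuous) and small $d$ are handled by the stated lower bounds on $d$.
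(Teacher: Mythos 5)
There is a genuine gap, and it is in the combinatorial heart of your encoding, not in the bookkeeping you deferred. You claim that after ordering a bad set $S=\{x_1,\dots,x_s\}$ you can charge $d-l$ collisions ``with earlier elements'' to every $x_j$ with $j\ge 2$, for a total of $(s-1)(d-l)$ forced collisions. Failure of $k$-uniqueness only says that each $x_j$ shares at least $d-l$ of its neighbors with \emph{some other} element of $S$, possibly a later one. Take $l=0$ and $S=\{a,b,c,e\}$ with $\Gamma(a)=\Gamma(b)$, $\Gamma(c)=\Gamma(e)$ and the two images disjoint: in any ordering, the first-appearing member of the second pair has no collisions with earlier elements, so no ordering realizes your witness. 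In general the only guarantee is the total count $ds-|\Gamma(S)|\ge s(d-l)/2$ (equivalently $|\Gamma(S)|\le s(d+l)/2$), which is about half of what you use, and the twin-pair example shows this is tight. The loss is not cosmetic: with only $s(d-l)/2$ collisions you must additionally pay for specifying \emph{which} (element, coordinate) pairs are the colliding ones, a factor of roughly $\binom{ds}{s(d-l)/2}\le 2^{ds}$, and then the per-$s$ exponent becomes about $ncs+ds+\tfrac{s(d-l)}{2}(\log_2 s-m)$. Under the lemma's actual hypotheses this does not close: e.g.\ for $k$-uniqueness with $n=1$, $d=4c$, $m=n+\kappa+1$, $s=2^{\kappa}$ the exponent is $cs+4cs-4cs=cs>0$, so the per-term bound is not even below $1$. (Your original arithmetic only ``closes'' because it assumes the inflated count $(s-1)(d-l)$.)

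The paper proves the lemma by a different accounting that avoids this. It first reduces failure to a vertex-expansion event via pigeonhole: if $\Gamma$ fails to be $k$-(majority-)unique on $S$ then $|\Gamma(S)|\le \tfrac{3}{4}d|S|$ (resp.\ $\tfrac12 d|S|$ for plain uniqueness). It then unions over the set $S$ and over a containing image set $B\subseteq\{1,\dots,d\}\times\bit{m}$ of that size, bounds $\Pr[\Gamma(S)\subseteq B]=\prod_i(|B_i|/2^m)^{|S|}\le(|B|/(d2^m))^{d|S|}$ by AM--GM (this is where the position-sensitive structure is handled), and sums $\binom{2^{cn}}{s}\binom{d2^m}{\frac34 ds}\bigl(\frac{3s}{4\cdot 2^m}\bigr)^{ds}$. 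The point is that all $ds$ edge values are charged against $B$, so the argument gains roughly $m/4$ (resp.\ $m/2$) bits per \emph{edge} rather than per collision, and the slack $m\ge n+\kappa+4$ (resp.\ $+1$) together with $d\ge 8c$ (resp.\ $4c$) then suffices for every $n\ge 1$. If you want to salvage a collision-style encoding you would effectively have to encode the image set $\Gamma(S)$ itself, which brings you back to the paper's argument; as written, your proposal does not establish the stated bounds.
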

\begin{proof}
We will give the proof for $k$-majority-uniqueness. The proof for $k$-uniqueness uses the same technique.
By a standard argument based on the pigeonhole principle, for~$\Gamma$ to be $k$-majority-unique it suffices that for all $S \subseteq \bit{n}^{c}$ with~$|S| \leq k$ we have that $|\Gamma(S)| > (3/4)d|S|$.
Given that $\Gamma$ is $k$-independent, we will now bound the probability that there exists a subset $S$ with $|S| \leq k$ such that $|\Gamma(S)| \leq (3/4)d|S|$.
For every pair of sets $(S, B)$ satisfying that $S \subseteq \bit{n}^{c}$ with $|S| \leq k$ and $B \subseteq \{1, 2, \dots, d \} \times \bit{m}$ with $|B| = (3/4)d|S|$, 
the probability that $\Gamma(S) \subseteq B$ is given by $\prod_{i = 1}^{d}(|B_{i}|/2^{m})^{|S|}$ where $B_{i} = \{ (i, y) \in B \}$.
By the inequality of the arithmetic and geometric means we have that
\begin{equation*}
\prod_{i = 1}^{d}\left(\frac{|B_{i}|}{2^{m}}\right)^{|S|} \leq \left(\frac{|B|}{d2^{m}}\right)^{d|S|}.
\end{equation*}
This allows us to ignore the structure of $B$, and obtain a union bound that matches that of the standard non-compartmentalized probabilistic construction of bipartite expanders.
The probability that $\Gamma$ fails to be $k$-majority-unique is upper bounded by
\begin{equation*}
\sum^{k}_{i=2} \binom{2^{cn}}{i} \binom{d2^{m}}{(3/4)di} \left( \frac{(3/4)di}{d2^{m}} \right)^{di}. 
\end{equation*}
For every choice of positive integers $c$, $n$, $\kappa$, for ${m \geq  n + \kappa + 4}$ and $d \geq 8 c $ we get a probability of failure less than $2^{-2cn}$.
\end{proof}
\subsection{A simple k-unique function}
In this section we introduce a simple construction of a $k$-unique function of the form $\Gamma : \bit{n}^{c} \to \bit{m}^{d}$.
We obtain $\Gamma$ as the last in a sequence $\Gamma_{1}, \Gamma_{2}, \dots, \Gamma_{c}$ of $k$-unique functions~${\Gamma_{i}: \bit{n}^{i} \to \bit{m}^{d}}$.
Each $\Gamma_{i}$ for $i>1$ is defined in terms of $\Gamma_{i-1}$.
At the bottom of the recursion we tabulate a $k$-independent function $\Gamma_{1} : \bit{n} \to \bit{m}^{d}$.
In the general step we apply $\Gamma_{i-1}$ to the length $i-1$ prefix of the key $(x_{1}, x_{2}, \dots, x_{i-1})$, concatenate the result vector component-wise with the $i$th character $x_{i}$, 
and apply a simple tabulation function $h_{i} : \bit{m + n}^{d} \to \bit{m}^{d}$.
The recursion is therefore given by
\begin{equation}
\Gamma_{i} = h_{i} \circ (\Gamma_{i-1} \conc \id^{(d)}) \label{eq:simple}
\end{equation}
where $\id^{(d)} : \bit{n} \to \bit{n}^{d}$ is the repeated identity function.
The following theorem summarizes the properties of $\Gamma$ in the word RAM model.

\begin{theorem} \label{thm:simple}
There exists a randomized data structure that takes as input positive integers $c$, $n$, $\kappa$ 
and initializes a function $\Gamma : \bit{n}^{c} \to \bit{n + \kappa + 1}^{4c}$.
In the word RAM model with word size $w$ the data structure satisfies the following:
\begin{itemize}
\item[--] The space usage is $O(2^{2n + \kappa}c^{3}(n + \kappa)/w)$.  
\item[--] The evaluation time of $\Gamma$ is $O(c^{2} + c^{3}(n + \kappa)/w)$.
\item[--] The probability that $\Gamma$ is $2^{\kappa}$-unique is at least $1 - 2^{-cn}$.
\end{itemize}
\end{theorem}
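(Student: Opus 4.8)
The plan is to prove the three bullet points by induction on $c$, using the recursion $\Gamma_i = h_i \circ (\Gamma_{i-1} \conc \id^{(d)})$ from \eqref{eq:simple} with $d = 4c$ and $m = n + \kappa + 1$ throughout. The key structural observation is that each intermediate function $\Gamma_i : \bit{n}^i \to \bit{m}^d$ is a composition of a simple tabulation function $h_i$ with the function $\Gamma_{i-1} \conc \id^{(d)}$, and that the latter, having a $2^\kappa$-independent first coordinate block $\Gamma_{i-1}$ combined with identity, makes $\Gamma_i$ itself a $2^\kappa$-independent function on $\bit{n}^i$. Specifically, for any $\leq 2^\kappa$ distinct inputs $(x_1,\dots,x_i)$, either two of them differ in the last character $x_i$ — in which case the identity block feeds distinct values into distinct positions and the fresh character tables of $h_i$ randomize independently — or they all share $x_i$ and differ in the prefix, in which case $\Gamma_{i-1}$ already supplies $2^\kappa$-independence by the inductive hypothesis and $h_i$ preserves it. I would formalize this as a lemma (or fold it into the induction) stating that $\Gamma_i$ is $2^\kappa$-independent; then Lemma~\ref{lem:expanderparameters} applies directly.

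For the failure probability, I would apply Lemma~\ref{lem:expanderparameters} at each level. Since $\Gamma_i$ is $2^\kappa$-independent with output parameter $m = n+\kappa+1 \geq n + \kappa + 1$ and degree $d = 4c \geq 4i$, the first bullet of Lemma~\ref{lem:expanderparameters} gives that $\Gamma_i$ is $2^\kappa$-unique with probability at least $1 - 2^{-dn/2} = 1 - 2^{-2cn}$. Actually the cleaner route, since $k$-uniqueness of the final $\Gamma = \Gamma_c$ is all we need, is to apply Lemma~\ref{lem:expanderparameters} once to $\Gamma = \Gamma_c$ after establishing $2^\kappa$-independence of $\Gamma_c$: that immediately yields failure probability at most $2^{-dn/2} = 2^{-2cn} \leq 2^{-cn}$. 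If instead the paper's intent is that the whole construction is built from independent character tables and one wants each $h_i$ to serve double duty, a union bound over the $c$ levels costs only a factor $c$ and still leaves the bound comfortably below $2^{-cn}$; I would pick whichever matches how $\Gamma$'s $k$-independence is cleanest to state.

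For space and time: there are $c$ simple tabulation functions $h_1, \dots, h_c$ (plus the base table $\Gamma_1$). Each $h_i : \bit{m+n}^d \to \bit{m}^d$ has $d$ character tables, each a function from $\bit{m+n}$ to $\bit{m}^d$; such a table has $2^{m+n}$ entries of $dm$ bits each. With $m = \Theta(n + \kappa)$ and $d = 4c$, one table occupies $O(2^{m+n} \cdot dm) = O(2^{2n+\kappa} c (n+\kappa))$ bits, i.e. $O(2^{2n+\kappa} c (n+\kappa)/w)$ words; there are $d = O(c)$ such tables per level and $O(c)$ levels, giving $O(2^{2n+\kappa} c^3 (n+\kappa)/w)$ words total, matching the claim. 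For evaluation time, computing $\Gamma_i$ from $\Gamma_{i-1}$ costs the cost of evaluating the simple tabulation $h_i$ on a length-$d$ string of $(m+n)$-bit characters: that is $d$ table lookups, each returning $dm = O(c(n+\kappa))$ bits, i.e. $O(c(n+\kappa)/w)$ words per lookup, so $O(d \cdot c(n+\kappa)/w) = O(c^2(n+\kappa)/w)$ per level; over $c$ levels this is $O(c^3(n+\kappa)/w)$, plus the $O(c)$ XOR-and-bookkeeping word operations per level for the $\leq 4c$ characters handled, namely $O(c)$ per lookup times $O(c)$ lookups times $O(c)$ levels giving the $O(c^2)$ term — so $O(c^2 + c^3(n+\kappa)/w)$ in total, as stated.

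I expect the main obstacle to be the $2^\kappa$-independence argument for the intermediate functions $\Gamma_i$ — precisely the case analysis on whether a given $\leq 2^\kappa$-subset of inputs is separated by the last character or must be separated by the prefix, and verifying that $\Gamma_{i-1} \conc \id^{(d)}$ feeds $h_i$ a set of $d$-character strings that are "coordinate-wise injective enough" for the fresh random tables of $h_i$ to produce fully independent uniform outputs. Once that invariant is nailed down the rest is bookkeeping: Lemma~\ref{lem:expanderparameters} is a black box for correctness, and the space/time bounds are direct counting in the word RAM model with the stated word-size convention. A secondary subtlety is making sure the output alphabet size $m = n + \kappa + 1$ is consistent at every level — the input to $h_i$ has character width $m + n$, but the output width stays $m$, so the recursion is well-typed and the base case $\Gamma_1 : \bit{n} \to \bit{m}^d$ is just a tabulated $2^\kappa$-independent function occupying $O(2^n \cdot dm /w)$ words, dominated by the later levels.
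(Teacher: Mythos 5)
Your space and time accounting matches the paper, but the correctness argument has a genuine gap: the invariant you maintain by induction is the wrong one. You claim each $\Gamma_{i}$ is (unconditionally) $2^{\kappa}$-independent, but this is false. Independence does not survive composition: since $\Gamma_{i-1}$ is only $k$-independent, there is positive probability that two distinct prefixes $p \neq p'$ satisfy $\Gamma_{i-1}(p) = \Gamma_{i-1}(p')$, in which case the keys $(p, x_{i})$ and $(p', x_{i})$ feed identical strings into $h_{i}$ and $\Gamma_{i}$ collides on them with certainty; so neither your ``all keys share the last character'' case nor your final claim that $\Gamma_{c}$ is exactly $k$-independent holds. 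Consequently your ``cleaner route'' --- applying Lemma~\ref{lem:expanderparameters} once to $\Gamma_{c}$ and skipping the union bound --- is unsound: Lemma~\ref{lem:expanderparameters} needs a genuinely $k$-independent function, and $\Gamma_c$ is only $k$-independent conditioned on events that themselves can fail at every level. The union bound over the $c$ levels in the paper is not an optional refinement; it is what pays for exactly these conditional failures.

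The other branch of your case analysis is also flawed: for keys that differ in the last character, distinctness of the characters fed to $h_{i}$ at every position is not enough, because a simple tabulation function evaluated on merely distinct strings is not $k$-independent for $k \geq 4$ (the classic rectangle $\{a,b\}\times\{s,t\}$ XORs to zero when characters pair up positionwise). What makes the construction work is the peeling property: one needs that in every subset of at most $k$ inputs some key has a \emph{unique} output character, i.e.\ $k$-uniqueness, and this is a property of the realized $\Gamma_{i-1}$, not a consequence of its independence. The paper's argument is: conditioned on $\Gamma_{i-1}$ being $k$-unique, $\Gamma_{i-1} \conc \id^{(d)}$ is $k$-unique, hence by Lemma~\ref{lem:expanderhashing} (which you never invoke) $\Gamma_{i} = h_{i} \circ (\Gamma_{i-1} \conc \id^{(d)})$ is $k$-independent over the fresh randomness of $h_{i}$, and then Lemma~\ref{lem:expanderparameters} shows $\Gamma_{i}$ is again $k$-unique except with probability $2^{-2cn}$; a union bound over the $c$ levels gives $1 - c2^{-2cn} \geq 1 - 2^{-cn}$. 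Replacing your independence invariant with this uniqueness invariant, and routing the independence step through Lemma~\ref{lem:expanderhashing}, repairs the proof; the bookkeeping you did for space and time can then stand as is.
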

\begin{proof}
Set $m = n + \kappa + 1$ and $d = 4c$.
We initialize $\Gamma$ by tabulating a $k$-independent function $\Gamma_{1} : \bit{n} \to \bit{m}^{d}$ and simple tabulation functions $h_{2}, h_{3}, \dots, h_{c}$.
In total we need to store $c$ functions that each have $O(c)$ character tables with $O(2^{2n + \kappa})$ entries of $O(c(n + \kappa))$ bits.
The space usage is therefore $O(2^{2n + \kappa}c^{3}(n + \kappa)/w)$. 
The same bound holds for the time to initialize the data structure. 

The evaluation time of $\Gamma$ can be found by considering the recursion $\Gamma_{i} = h_{i} \circ (\Gamma_{i-1} \conc \id^{(d)})$. 
At each of the $c$ steps we perform $O(c)$ lookups and take the exclusive-or of $O(c)$ bit strings of length $O(c(n + \kappa))$.
The total evaluation time is therefore $O(c^{2} + c^{3}(n + \kappa)/w)$.

Consider the function $\Gamma_{i} = h_{i} \circ (\Gamma_{i-1} \conc \id^{(d)})$.
Conditioned on $\Gamma_{i-1}$ being $k$-unique, it is easy to see that $(\Gamma_{i-1} \conc \id^{(d)})$ is $k$-unique, and by Lemma \ref{lem:expanderhashing} we have that $\Gamma_{i}$ is $k$-independent.
For our choice of parameters, according to Lemma \ref{lem:expanderparameters} the probability that $\Gamma_{i}$ fails to be $k$-unique is less than $2^{-2cn}$.
Therefore, $\Gamma$ is $k$-unique if $\Gamma_{1}, \Gamma_{2}, \dots, \Gamma_{c}$ are $k$-unique.
This happens with probability at least $1 - c2^{-2cn} \geq 1 - 2^{-cn}$. 
\end{proof}

Combining Theorem \ref{thm:simple} and Lemma \ref{lem:expanderhashing}, we get $k$-independent hashing in the word RAM model.
We state our result in terms of a data structure that represents a family of functions~$\mathcal{F}$.
The family is defined as in Lemma \ref{lem:expanderhashing} and represented by a particular instance of a function $\Gamma$, constructed using Theorem \ref{thm:simple}, 
together with the parameters of a family of simple tabulation functions.

\begin{corollary} \label{cor:simple}
\family
\begin{itemize}
\item[--] The space used to represent $\mathcal{F}$, as well as a function $f \in \mathcal{F}$, is $O(ku^{1/t}t^{2}(\log u + t \log k)/w)$.
\item[--] The evaluation time of $f$ is $O(t^{2} + t^{2}(\log u +  t \log k)/w)$.
\item[--] With probability at least $1 - 1/u$ we have that $\mathcal{F}$ is a $k$-independent family.
\end{itemize}
\end{corollary}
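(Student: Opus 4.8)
The plan is to instantiate Theorem~\ref{thm:simple} and Lemma~\ref{lem:expanderhashing} with parameters chosen so that the resulting $k$-independent family maps $[u]$ to $[r]$. First I would set $c = t$ and choose $n$ so that $2^{n} \approx u^{1/t}$, i.e.\ $n = \lceil \log(u)/t \rceil$, which makes the domain $\bit{n}^{c}$ of size at least $u$; one then restricts to the first $u$ elements, or pads the input to $c$ characters, to get a function on $[u]$. The independence parameter is $\kappa = \lceil \log k \rceil$, so that $2^{\kappa} \geq k$. Theorem~\ref{thm:simple} then produces $\Gamma : \bit{n}^{c} \to \bit{n+\kappa+1}^{4c}$ that is $2^{\kappa}$-unique (hence $k$-unique, since $k$-uniqueness for the larger parameter implies it for the smaller) with probability at least $1 - 2^{-cn} \geq 1 - 1/u$, the last inequality holding because $cn \geq \log u$. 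Composing with a simple tabulation function $h : \bit{n+\kappa+1}^{4c} \to [r]$ as in Lemma~\ref{lem:expanderhashing} yields the $k$-independent family $\mathcal{F} = h \circ \Gamma$, and the failure event is exactly the event that $\Gamma$ fails to be $k$-unique.

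Next I would translate the space and time bounds of Theorem~\ref{thm:simple} into the stated forms by substituting $c = t$, $n = \Theta(\log(u)/t)$, and $\kappa = \Theta(\log k)$. The space bound $O(2^{2n+\kappa}c^{3}(n+\kappa)/w)$ becomes $O(u^{2/t} k\, t^{3} (\log(u)/t + \log k)/w)$; I then need to argue this simplifies to the claimed $O(k u^{1/t} t^{2}(\log u + t\log k)/w)$. Here I must be slightly careful: the exponent on $u$ in Theorem~\ref{thm:simple} is $2n$, not $n$, because each character table has $2^{2n+\kappa}$ entries (the tables of $h_i$ act on the alphabet $\bit{m+n}$ with $m = n + \kappa + 1$). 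To recover a $u^{1/t}$ dependence rather than $u^{2/t}$, I would instead choose $n$ with $2^{n} \approx u^{1/t}$ but note that the table size $2^{2n+\kappa} = 2^{2n} \cdot 2^{\kappa}$; absorbing one factor of $2^{n} = u^{1/t}$ and keeping the other would not work directly, so the correct reading is that the relevant bound uses $c = t$ characters over an alphabet of size roughly $u^{1/t}$, and $2^{2n} = u^{2/t}$ is what genuinely appears — meaning I should set $2^{2n} \approx u^{1/t}$, i.e.\ $n = \lceil \log(u)/(2t)\rceil$ and $c = t$, which still gives domain size $\geq 2^{nc} = u^{1/2}$; to cover all of $[u]$ one takes $c = 2t$ instead, recovering $2^{nc} \geq u$ while keeping $2^{2n} \approx u^{1/t}$. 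Then the space is $O(u^{1/t} k\, t^{3}(\log(u)/t + \log k)/w) = O(k u^{1/t} t^{2}(\log u + t \log k)/w)$, and similarly the evaluation time $O(c^{2} + c^{3}(n+\kappa)/w)$ becomes $O(t^{2} + t^{3}(\log(u)/t + \log k)/w) = O(t^{2} + t^{2}(\log u + t\log k)/w)$, as claimed. The space for $f$ itself (the tables of $h$) is $O(c \cdot 2^{m} \cdot \log r / w)$, which is dominated by the space for $\Gamma$ under the implicit assumption that $w$ is large enough to address the structure and $\log r = O(w)$.

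The main obstacle I anticipate is the bookkeeping around the alphabet size: getting the exponent of $u$ to come out as $1/t$ rather than $2/t$ requires choosing $c = \Theta(t)$ with the constant tuned against the squaring of the alphabet in the intermediate tables, and checking that $2^{nc} \geq u$ still holds. Everything else — the probability bound $1 - c 2^{-2cn} \geq 1 - 1/u$, the reduction from $2^{\kappa}$-uniqueness with $\kappa = \lceil \log k\rceil$ to $k$-uniqueness, and the composition giving $k$-independence via Lemma~\ref{lem:expanderhashing} — is immediate from the quoted results. I would also remark that the input must be viewed as an element of $\bit{n}^{c}$, which is standard when $u$ is a prime power or by a trivial injective padding otherwise, and that the $AC^{0}$/Practical-RAM claim is inherited from Theorem~\ref{thm:simple} since $\mathcal{F}$ only adds simple tabulation lookups and XORs on top of $\Gamma$.
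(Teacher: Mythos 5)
Your proposal is correct and, after the self-correction in the middle, lands on exactly the paper's argument: apply Theorem~\ref{thm:simple} with $c = 2t$, $n = \lceil (\log u)/2t \rceil$, $\kappa = \lceil \log k \rceil$, and compose with a simple tabulation function $h : \bit{n+\kappa+1}^{4c} \to [r]$ via Lemma~\ref{lem:expanderhashing}, with the space and time of $h$ dominated by those of $\Gamma$. The bookkeeping you flag (the $2^{2n+\kappa}$ table size forcing $2^{2n} \approx u^{1/t}$ and hence $c = 2t$) is precisely how the paper's parameters are set, so there is no gap.
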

\begin{proof}
We apply Theorem \ref{thm:simple}, setting $c = 2t$, ${n = \lceil (\log u)/2t \rceil}$, ${\kappa = \lceil \log k \rceil}$.
This gives a function ${\Gamma : \bit{n}^{c} \to \bit{n + \kappa + 1}^{4c}}$ that is $k$-unique over $[u]$ with probability at least $1 - 1/u$.
To sample a function from the family we follow the approach of Lemma \ref{lem:expanderhashing} and compose $\Gamma$ with a simple tabulation function $h : \bit{n + \kappa + 1}^{4c} \to [r]$.
The space used to store $\Gamma$ follows directly from Theorem \ref{thm:simple} and dominates the space used by $h$.
Similarly, the evaluation time of $h \circ \Gamma$ is dominated by the time it takes to evaluate $\Gamma$.
\end{proof}

\begin{remark}
For every integer $\tau \geq 1$ we can construct a family $\mathcal{F}^{(\tau)}$ that is $k$-independent with probability at least ${1 - u^{-\tau}}$ at the cost of increasing the space usage and evaluation time by a factor $\tau$.
The family is defined by
\begin{equation*}
\mathcal{F}^{(\tau)} = \{ f = \bigoplus_{i = 1}^{\tau} f_{i} \mid f_{i} \in \mathcal{F}_{i} \}
\end{equation*} 
where each $\mathcal{F}_{i}$ is constructed independently.
\end{remark}

\begin{remark}
The recursion in equation \eqref{eq:simple} is well suited for sequential evaluation where the task is to evaluate $\Gamma$ in an interval of $[u]$, in order to generate a $k$-independent sequence of random variables.
To see this, note that once we have evaluated $\Gamma$ on a key $x = (x_{1}, x_{2}, \dots, x_{c})$, a change in the last character only changes the last step of the recursion.   
It follows that we can generate $k$-independent variables using amortized time $O(t)$ and space close to $O(ku^{1/t})$.
To our knowledge, this presents the best space-time tradeoff for the generation of $k$-independent variables in the case where we do not have access to multiplication over a suitable finite field as in \cite{christiani2014}.  
\end{remark}
\subsection{A divide and conquer approach}
In this section we introduce a data structure for representing a $k$-majority-unique function that offers a faster evaluation time at the cost of using more space.
As in the simple construction from Theorem~\ref{thm:simple} we use the technique of alternating between expansion and independence, 
but rather than reading a single character at the time, we view the key as composed of two characters $x =(x_{1}, x_{2})$ and recurse on each. 
In the previous section we increased the size of the domain of our $k$-unique function by concatenating part of the key, forming the $k$-unique function $\Gamma \conc \id^{(d)}$.
If we use only a few large characters this approach becomes very costly in terms of the space required to store the simple tabulation function $h$ in the composition $h \circ (\Gamma \conc \id^{(d)})$.
To be able to efficiently recurse on large characters we show that the function $\Upsilon((x_{1}, x_{2})) = \Gamma(x_{1}) \conc \Gamma(x_{2})$ is $k$-unique when $\Gamma$ is $k$-majority-unique.   

\begin{lemma} \label{lem:interleaving}
Let $\Gamma : \bit{n}^{c} \to \bit{m}^{d}$ be a $k$-majority-unique function.
Then $\Gamma \conc \Gamma : \bit{n}^{c} \times  \bit{n}^{c} \to \bit{2m}^{d}$ is $k$-unique.
\end{lemma}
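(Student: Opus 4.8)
The plan is to derive $k$-uniqueness of $\Upsilon := \Gamma \conc \Gamma$ from two separate invocations of the $k$-majority-uniqueness of $\Gamma$, one for each coordinate of a pair, and then to glue the two ``good'' coordinate sets together by a pigeonhole argument that crucially uses the majority threshold $d/2$. Recall that by component-wise concatenation the $i$-th neighbor of a pair $(x_{1},x_{2})$ under $\Upsilon$ is $\bigl(i,\ \Gamma(x_{1})_{i} \conc \Gamma(x_{2})_{i}\bigr)$.

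Fix $T \subseteq \bit{n}^{c} \times \bit{n}^{c}$ with $|T| \le k$; I want to exhibit a pair $(a,b) \in T$ with at least one unique neighbor under $\Upsilon$. First I would pass to the projection $S_{1} = \{x_{1} : (x_{1},x_{2}) \in T \text{ for some } x_{2}\}$, which satisfies $|S_{1}| \le |T| \le k$. Applying $k$-majority-uniqueness of $\Gamma$ to $S_{1}$ yields an element $a \in S_{1}$ and a coordinate set $I_{a} \subseteq \{1,\dots,d\}$ with $|I_{a}| > d/2$ such that for every $i \in I_{a}$ and every $x_{1} \in S_{1} \setminus \{a\}$ we have $\Gamma(x_{1})_{i} \neq \Gamma(a)_{i}$. (This is exactly Definition~\ref{def:k-unique} unpacked, using that the $d$ neighbors $(1,\Gamma(a)_{1}),\dots,(d,\Gamma(a)_{d})$ of a vertex are always pairwise distinct, so $|\Gamma(\{a\}) \setminus \Gamma(S_{1}\setminus\{a\})|$ counts precisely the coordinates on which $a$ disagrees with every other member of $S_{1}$.) Next I would look at the fiber $T_{a} = \{x_{2} : (a,x_{2}) \in T\}$, which is nonempty and has size at most $k$, and apply $k$-majority-uniqueness of $\Gamma$ again, this time to $T_{a}$, obtaining $b \in T_{a}$ and a coordinate set $I_{b}$ with $|I_{b}| > d/2$ such that for every $i \in I_{b}$ and every $x_{2} \in T_{a} \setminus \{b\}$ we have $\Gamma(x_{2})_{i} \neq \Gamma(b)_{i}$.

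Since $|I_{a}| + |I_{b}| > d$ and both sets lie inside $\{1,\dots,d\}$, there is a coordinate $i \in I_{a} \cap I_{b}$. I claim the $i$-th neighbor $\bigl(i,\ \Gamma(a)_{i} \conc \Gamma(b)_{i}\bigr)$ of the pair $(a,b) \in T$ is unique in $T$. Indeed, suppose some $(y_{1},y_{2}) \in T \setminus \{(a,b)\}$ had $\Gamma(y_{1})_{i} \conc \Gamma(y_{2})_{i} = \Gamma(a)_{i} \conc \Gamma(b)_{i}$; then $\Gamma(y_{1})_{i} = \Gamma(a)_{i}$ and $\Gamma(y_{2})_{i} = \Gamma(b)_{i}$. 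Because $y_{1} \in S_{1}$ and $i \in I_{a}$, the first equality forces $y_{1} = a$; then $y_{2} \in T_{a} \setminus \{b\}$, and because $i \in I_{b}$ the second equality is impossible. Hence $(a,b)$ has a unique neighbor, so $\Upsilon$ is $k$-unique.

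The only step requiring real thought is the combination: a bare uniqueness guarantee ($l \ge 0$) would produce coordinate sets $I_{a}, I_{b}$ that could be disjoint, and the argument would break down — it is exactly the majority threshold ($l \ge d/2$) that forces $I_{a} \cap I_{b} \neq \emptyset$, which is why the hypothesis must be $k$-majority-uniqueness rather than plain $k$-uniqueness. Everything else is routine bookkeeping: verifying that the projection $S_{1}$ and the fiber $T_{a}$ have size at most $k$ so that the hypothesis applies to them, and translating between the ``number of unique neighbors exceeds $d/2$'' phrasing of Definition~\ref{def:k-unique} and the ``more than $d/2$ coordinates of disagreement'' phrasing used above.
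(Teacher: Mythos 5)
Your proof is correct and takes essentially the same route as the paper: pick the first component $a$ by applying $k$-majority-uniqueness to the projection $\pi_{1}(S)$, pick $b$ by applying it to the fiber over $a$, and combine the two ``more than $d/2$'' guarantees to get a unique neighbor for $(a,b)$. The only cosmetic difference is that you intersect explicit coordinate sets by pigeonhole, whereas the paper phrases the same step as the cardinality bound $|A\cap B|\ge |A|+|B|-|\Upsilon(\{x\})|$ with $|\Upsilon(\{x\})|=d$.
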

\begin{proof}
To ease notation we define $\Upsilon = \Gamma \conc \Gamma$.
Let $x = (x_{1}, x_{2})$ denote an element of $\bit{n}^{c} \times \bit{n}^{c}$. 
For $S \subseteq \bit{n}^{c} \times  \bit{n}^{c}$ define $S_{1,a} = \{ x \in S \mid x_{1} = a \}$.
The following holds for every $x = (x_{1}, x_{2}) \in S$.
\begin{equation} \label{eq:interleavingbound}
\begin{aligned}
|\Upsilon(\{x\}) \setminus \Upsilon(S \setminus \{x\})| &= |\Upsilon(\{x\}) \setminus (\Upsilon(S \setminus S_{1,x_{1}}) \cup \Upsilon(S_{1,x_{1}} \setminus \{x\}))| \\
&= |(\Upsilon(\{x\}) \setminus \Upsilon(S \setminus S_{1,x_{1}})) \cap (\Upsilon(\{x\}) \setminus \Upsilon(S_{1,x_{1}} \setminus \{x\}))| \\
&\geq |\Upsilon(\{x\}) \setminus \Upsilon(S \setminus S_{1,x_{1}})| + |\Upsilon(\{x\}) \setminus \Upsilon(S_{1,x_{1}} \setminus \{x\})| - |\Upsilon(\{x\})|. 
\end{aligned}
\end{equation}
We will show that for every $S \subseteq \bit{n}^{c} \times \bit{n}^{c}$ with $|S| \leq k$ there exists a key $(x_{1}, x_{2}) \in S$ such that $|\Upsilon(\{x\}) \setminus \Upsilon(S \setminus \{x\})| > 0$. 
We begin by choosing the first component of $x$.
Let $\pi_{j}(S) = \{ x_{j} \mid x \in S \}$ denote the set of $j$th components of $S$. 
By the $k$-majority-uniqueness of $\Gamma$, considering the set $\pi_{1}(S)$, we have that
\begin{equation*}
\exists x_{1} \in \pi_{1}(S) : \forall x \in S_{1,x_{1}} : |\Upsilon(\{x\}) \setminus \Upsilon(S \setminus S_{1,x_{1}})| > d/2.
\end{equation*}
Fix $x_{1}$ with this property and consider the choice of $x_{2}$.
By the $k$-majority-uniqueness of $\Gamma$, considering the set $\pi_{2}(S)$, we have that
\begin{equation*}
 \forall x_{1} \in \pi_{1}(S) : \exists x_{2} \in \pi_{2}(S_{1,x_{1}}) : |\Upsilon(\{x\}) \setminus \Upsilon(S_{1,x_{1}} \setminus \{x\})| > d/2.
\end{equation*}
We can therefore always find a key $(x_{1}, x_{2}) \in S$ such that both $|\Upsilon(\{x\}) \setminus \Upsilon(S \setminus S_{1,x_{1}})| > d/2$, $|\Upsilon(\{x\}) \setminus \Upsilon(S_{1,x_{1}} \setminus \{x\})| > d/2$ are satisfied.
The result follows from equation \eqref{eq:interleavingbound} where we use the fact that $|\Upsilon(\{x\})| = d$.
\end{proof}

We will give a recursive construction of a $k$-majority-unique function of the form $\Gamma_{i} : \bit{n}^{2^{i}} \to \bit{m}^{2^{i + 3}}$.
Let $h_{i} : \bit{2m}^{2^{i +2}} \to \bit{m}^{2^{i+3}}$ be a simple tabulation function.
For $i > 0$ the recursion takes the following form.  
\begin{equation}
\Gamma_{i} = h_{i} \circ (\Gamma_{i-1} \conc \Gamma_{i-1}). \label{eq:majorityconcatenation}
\end{equation}
At the bottom of the recursion we tabulate a $k$-independent function $\Gamma_{0}$.

\begin{theorem} \label{thm:majorityrecursion}
\kunique
\begin{itemize}
\item[--] The space usage is $O(2^{2(n + \kappa + \lambda)}(n + \kappa)/w)$. 
\item[--] The evaluation time of $\Gamma$ is $O(2^{\lambda}(\lambda + 2^{\lambda}(n + \kappa)/w))$.
\item[--] With probability at least $1 - 2^{-2n + 1}$ we have that $\Gamma$ is $2^{\kappa}$-majority-unique.
\end{itemize}
\end{theorem}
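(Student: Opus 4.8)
The plan is to analyze the recursion $\Gamma_i = h_i \circ (\Gamma_{i-1} \conc \Gamma_{i-1})$ defined in equation \eqref{eq:majorityconcatenation}, run for $\lambda$ levels starting from a tabulated $2^\kappa$-independent function $\Gamma_0 : \bit{n} \to \bit{m}^8$, and output $\Gamma = \Gamma_\lambda : \bit{n}^{2^\lambda} \to \bit{m}^{2^{\lambda+3}}$ where $m = n + \kappa + 4$. First I would fix the parameters: by Lemma~\ref{lem:expanderparameters} (second bullet), to get $2^\kappa$-majority-uniqueness with failure probability at most $2^{-dn/4}$ we need $m \geq n + \kappa + 4$ and $d \geq 8c$; at level $i$ the function $\Gamma_i$ has $c = 2^i$ input characters and $d = 2^{i+3} = 8 \cdot 2^i$ output characters, so the degree requirement $d \geq 8c$ holds with equality at every level, and the choice $m = n+\kappa+4$ meets the alphabet requirement. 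Note also that the input alphabet grows: the input to $\Gamma_i \conc \Gamma_i$ is a pair, i.e.\ $2^i$ characters of $n$ bits in each copy, matching the form $\bit{n}^{2^i} \times \bit{n}^{2^i} = \bit{n}^{2^{i+1}}$ as required for $\Gamma_{i+1}$; the intermediate alphabet before $h_{i}$ is $\bit{2m}$, as stated.

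For correctness I would argue by induction on $i$ that $\Gamma_i$ is $2^\kappa$-majority-unique, conditioned on a good event at each level. The base case: $\Gamma_0$ is tabulated as a genuine $2^\kappa$-independent function, so by Lemma~\ref{lem:expanderparameters} it is $2^\kappa$-majority-unique with probability at least $1 - 2^{-8n/4} = 1 - 2^{-2n}$. For the inductive step, assume $\Gamma_{i-1}$ is $2^\kappa$-majority-unique. By Lemma~\ref{lem:interleaving}, $\Gamma_{i-1} \conc \Gamma_{i-1}$ is then $2^\kappa$-unique. Composing with the simple tabulation function $h_i$, Lemma~\ref{lem:expanderhashing} gives that $\Gamma_i = h_i \circ (\Gamma_{i-1} \conc \Gamma_{i-1})$ is a $2^\kappa$-independent family; fixing the tables of $h_i$ sampled by the data structure, Lemma~\ref{lem:expanderparameters} applied with $c = 2^i$, $d = 2^{i+3}$, $m = n+\kappa+4$ says $\Gamma_i$ is $2^\kappa$-majority-unique except with probability at most $2^{-dn/4} = 2^{-2^{i+1}n} \leq 2^{-2n}$. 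A union bound over the $\lambda+1$ levels $i = 0, 1, \dots, \lambda$ gives total failure probability at most $(\lambda+1) \cdot 2^{-2n}$; bounding $\lambda + 1$ crudely (using that $w$ must address the structure, so $\lambda$ is not too large, and the geometric decay in $i$ actually makes the sum dominated by the $i=0$ term times a constant) yields the stated bound $1 - 2^{-2n+1}$.

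For the resource bounds I would track sizes along the recursion. At level $i$, the function $h_i$ (or $\Gamma_0$ at the base) has $2^i$ character tables (up to constants $2^{i+2}$), each with $2^{2m}$ or $2^n$ entries respectively of $O(m \cdot 2^{i+3}) = O((n+\kappa) 2^i)$ bits; the dominant level is $i = \lambda$, giving space $O(2^{\lambda} \cdot 2^{2m} \cdot 2^{\lambda}(n+\kappa)/w) = O(2^{2(n+\kappa+\lambda)}(n+\kappa)/w)$ after substituting $m = n+\kappa+4$ and absorbing constants (the geometric sum over levels is dominated by its last term). For evaluation time: unrolling the recursion, evaluating $\Gamma_\lambda$ requires evaluating $\Gamma_{\lambda-1}$ twice, hence $2^\lambda$ evaluations of $\Gamma_0$ at the leaves of a binary recursion tree of depth $\lambda$; at each of the $2^\lambda - 1$ internal nodes we perform $O(2^i)$ table lookups plus an exclusive-or of $O(2^i)$ strings of $O((n+\kappa)2^i)$ bits, i.e.\ $O(2^{2i}(n+\kappa)/w)$ word operations. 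Summing over the tree, the lookup cost is $O(2^\lambda \cdot \lambda)$ (each of $2^\lambda$ leaves on a root path of length $\lambda$, or equivalently $\sum_i 2^i \cdot 2^{\lambda - i}$ — I should double-check this does not pick up an extra factor) and the XOR cost is dominated by the top levels, $O(2^\lambda \cdot 2^\lambda(n+\kappa)/w)$, giving total $O(2^\lambda(\lambda + 2^\lambda(n+\kappa)/w))$ as claimed.

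The main obstacle I anticipate is getting the evaluation-time accounting exactly right, since the recursion tree has branching factor $2$ and the per-node cost grows geometrically with the level; it is easy to be off by a factor of $\lambda$ or $2^\lambda$. The resolution is to observe that the total number of leaf evaluations is exactly $2^\lambda$, that the number of internal nodes at level $i$ (counting from the leaves) is $2^{\lambda - i}$ each costing $O(2^i)$ lookups, so the lookup total telescopes to $O(\lambda 2^\lambda)$, while the XOR work per level-$i$ node is $O(2^{2i}(n+\kappa)/w)$ and $\sum_{i=0}^{\lambda} 2^{\lambda-i} \cdot 2^{2i} = 2^\lambda \sum_i 2^i = O(2^{2\lambda})$, matching the stated $O(2^\lambda \cdot 2^\lambda (n+\kappa)/w)$ term. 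A secondary subtlety is confirming that the input/output alphabet bookkeeping in \eqref{eq:majorityconcatenation} is consistent at every level — in particular that applying $\conc$ to a pair of length-$2^{i-1}$ strings of $n$-bit characters yields exactly the $\bit{n}^{2^i}$ domain expected by $\Gamma_i$, and that $h_i$'s domain $\bit{2m}^{2^{i+2}}$ matches the output of $\Gamma_{i-1} \conc \Gamma_{i-1}$, which has $2^{i+2}$ characters of width $2m$ — both of which follow directly from the definitions of $\conc$ and of the degrees $2^{i+3}$.
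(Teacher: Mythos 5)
Your proposal is correct and follows essentially the same route as the paper: the same recursion $\Gamma_i = h_i \circ (\Gamma_{i-1} \conc \Gamma_{i-1})$, with Lemma~\ref{lem:interleaving} giving $k$-uniqueness of the concatenation, Lemma~\ref{lem:expanderhashing} giving $k$-independence after composing with $h_i$, Lemma~\ref{lem:expanderparameters} (with $c=2^i$, $d=2^{i+3}$, $m=n+\kappa+4$) giving per-level failure probability $2^{-2^{i+1}n}$, and the same space and time accounting (the paper states the recurrence $T(i)\le 2T(i-1)+O(2^i(1+2^i(n+\kappa)/w))$ rather than unrolling the tree, which is equivalent). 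The one point to state cleanly is the failure bound: do not use the crude $(\lambda+1)2^{-2n}$ union bound but, as you note parenthetically and as the paper does, sum the doubly-exponentially decaying terms, $\sum_{i=0}^{\lambda}2^{-2^{i+1}n}\le 2^{-2n+1}$.
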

\begin{proof}
Let $m = n + \kappa + 4$.
We initialize $\Gamma$ by tabulating $\Gamma_{0}$ and the character tables of the simple tabulation functions $h_{1}, h_{2}, \dots, h_{\lambda}$ where $h_{i} : \bit{2m}^{2^{i +2}} \to \bit{m}^{2^{i+3}}$.
In total we have $O(2^{\lambda})$ tables with $O(2^{2(n + \kappa)})$ entries of $O(2^{\lambda}(n + \kappa))$ bits, resulting in a total space usage of $O(2^{2(n + \kappa + \lambda)}(n + \kappa)/w)$.

Let $T(i)$ denote the evaluation time of $\Gamma_{i}$.
For $i = 0$ we can evaluate $\Gamma_{0}$ by performing a single lookup in $O(1)$ time.
For $i > 0$ evaluating $h_{i} \circ (\Gamma_{i-1} \conc \Gamma_{i-1})$ takes two evalutions of $\Gamma_{i-1}$ followed by evaluating $h_{i}$ on their concatenated output using $O(2^{i}(1 + 2^{i}(n + \kappa)/w))$ operations.
The recurrence takes the form
\begin{equation*}
T(i) \leq 
\begin{cases} 2T(i-1) + O(2^{i}(1 + 2^{i}(n + \kappa)/w)) & \mbox{if } i > 0 \\ 
O(1) & \mbox{if } i = 0 
\end{cases}
\end{equation*}
The solution to the recurrence is $O(2^{i}(i + 2^{i}(n + \kappa)/w))$.

We now turn our attention to the probability that $\Gamma_{i} = h_{i} \circ (\Gamma_{i-1} \conc \Gamma_{i-1})$ fails to be $k$-majority-unique.
Conditional on $\Gamma_{i-1}$ being $k$-majority-unique, by Lemma \ref{lem:interleaving} we have that $(\Gamma_{i-1} \conc \Gamma_{i-1})$ is $k$-unique and composing it with $h_{i}$ gives us a $k$-independent function.
For our choice of parameters, according to Lemma \ref{lem:expanderparameters} the probability that $\Gamma_{i}$ fails to be $k$-majority-unique is less than $2^{-2^{i + 1}n}$.
Therefore, $\Gamma$ is $k$-majority-unique if $\Gamma_{0}, \Gamma_{1}, \dots, \Gamma_{\lambda}$ are $k$-majority-unique.
This happens with probability at least $1 - \sum_{i=0}^{\lambda} 2^{-2^{i + 1}n} \geq 1 - 2^{-2n + 1}$. 
\end{proof}

\begin{remark}
The recursion in equation \eqref{eq:majorityconcatenation} is well suited for parallelization. 
If we have $c$ processors working in lock-step with some small shared memory we can evaluate $\Gamma$ with domain $\bit{n}^{c}$ in time~$O(c)$. 
\end{remark}

\begin{corollary}
\family
\begin{itemize}
\item[--] The space used to represent $\mathcal{F}$, as well as a function $f \in \mathcal{F}$, is $O(ku^{1/t}t(\log u + t\log k)/w)$.
\item[--] The evaluation time of $f$ is $O(t \log t + t (\log u + t \log k)/w)$.  
\item[--] With probability at least $1 - u^{-1/t}$ we have that $\mathcal{F}$ is a $k$-independent family.
\end{itemize}
\end{corollary}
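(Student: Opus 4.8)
The plan is to mirror the proof of Corollary~\ref{cor:simple}, but driven by the divide-and-conquer construction of Theorem~\ref{thm:majorityrecursion} instead of the one-character-at-a-time construction of Theorem~\ref{thm:simple}. First I would instantiate Theorem~\ref{thm:majorityrecursion} with parameters tuned to $u$, $k$, $t$: take $\lambda = \lceil \log t \rceil$ so that $2^{\lambda} = \Theta(t)$, take $n$ to be the least integer with $n\,2^{\lambda} \geq \log u$ so that the domain $\bit{n}^{2^{\lambda}}$ has at least $u$ elements (hence $n = O((\log u)/t)$ and $2^{n} = u^{O(1/t)}$), and take $\kappa = \lceil \log k \rceil$ so that $2^{\kappa} \geq k$. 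The theorem then returns a function $\Gamma : \bit{n}^{2^{\lambda}} \to \bit{n+\kappa+4}^{2^{\lambda+3}}$ that is $2^{\kappa}$-majority-unique, hence $k$-unique (majority-uniqueness gives a witness with more than $0$ unique neighbors, and $k \le 2^{\kappa}$ means the witness property is only needed for subsets of size at most $k$), except with probability at most $2^{-2n+1}$; by arranging the rounding in $n$ so that $2n-1 \geq (\log u)/t$ this is at most $u^{-1/t}$. Identifying $[u]$ with a subset of $\bit{n}^{2^{\lambda}}$ keeps $\Gamma$ $k$-unique, since the property of Definition~\ref{def:k-unique} quantifies over all subsets of size at most $k$ and the unique-neighbor witness lies inside the subset under consideration.

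Second, I would apply Lemma~\ref{lem:expanderhashing}: composing this $k$-unique $\Gamma$ with a simple tabulation function $h : \bit{n+\kappa+4}^{2^{\lambda+3}} \to [r]$ gives a $k$-independent family $\mathcal{F} = \{\,h \circ \Gamma\,\}$, from which a uniformly random member is drawn by sampling the $2^{\lambda+3}$ character tables of $h$. The data structure stores the tables built for $\Gamma$ in Theorem~\ref{thm:majorityrecursion} together with the character tables of $h$. Because $r = u^{O(1)}$, the tables of $h$ take $O(2^{\lambda}\,2^{n+\kappa}\,(\log u)/w)$ words, which is dominated by the $O(2^{2(n+\kappa+\lambda)}(n+\kappa)/w)$ words used by $\Gamma$; substituting $2^{\lambda} = \Theta(t)$, $2^{n} = u^{O(1/t)}$, $2^{\kappa} = \Theta(k)$, and $n+\kappa = O((\log u)/t + \log k)$ yields the claimed space, and the same substitution into the initialization bound of Theorem~\ref{thm:majorityrecursion} gives the same bound. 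For the evaluation time, computing $h \circ \Gamma$ on a key first evaluates $\Gamma$ at cost $O(2^{\lambda}(\lambda + 2^{\lambda}(n+\kappa)/w))$ and then evaluates $h$ with $O(2^{\lambda})$ lookups and additions on $O(\log u)$-bit values, i.e.\ $O(2^{\lambda}(1 + (\log u)/w))$ extra operations, which is dominated; the substitution gives $O(t\log t + t(\log u + t\log k)/w)$. The failure probability of $\mathcal{F}$ is exactly that of $\Gamma$, i.e.\ at most $u^{-1/t}$.

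The step I expect to demand the most care is the parameter bookkeeping: $\lambda$ and the rounding of $n$ must be chosen so that, simultaneously, the domain $\bit{n}^{2^{\lambda}}$ covers $[u]$, the factors $2^{2n}$, $2^{2\kappa}$, $2^{2\lambda}$ appearing in the space of Theorem~\ref{thm:majorityrecursion} contract to the advertised $u^{1/t}$-and-$\poly(k,t)$ shape, and $2^{-2n+1} \le u^{-1/t}$; and the $/w$ divisions must be carried consistently through both the space and time expressions, using the standing assumption that $w$ suffices to address the space. It is worth observing at the end that in the regime $w = \Theta(\log u)$ and $k = u^{O(1/t)}$ the $t\log k$ contributions are absorbed and the bounds collapse to the $O(ku^{1/t}t)$ space and $O(t\log t)$ time listed for Corollary~3 in Table~\ref{tab:results}; confirming that this is the regime in which the bounds take their cleanest form is essentially the only substantive point beyond plugging Theorem~\ref{thm:majorityrecursion} into Lemma~\ref{lem:expanderhashing}.
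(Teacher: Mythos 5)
Your overall route is exactly the paper's: instantiate Theorem~\ref{thm:majorityrecursion} with $\lambda = \Theta(\log t)$, $n = \Theta((\log u)/t)$, $\kappa = \lceil\log k\rceil$, then compose the resulting $k$-(majority-)unique $\Gamma$ with a simple tabulation function via Lemma~\ref{lem:expanderhashing}; the observations about restricting the domain to $[u]$ and about the cost of $h$ being dominated are fine. The problem is that your concrete parameter choice does not deliver the claimed space bound. With $\lambda=\lceil\log t\rceil$ you are only guaranteed $2^{\lambda}\geq t$, and taking $n$ minimal with $n2^{\lambda}\geq\log u$ then allows $n\approx(\log u)/t$ (e.g.\ whenever $t$ is a power of two). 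Since the space in Theorem~\ref{thm:majorityrecursion} scales as $2^{2(n+\kappa+\lambda)}$, this produces a factor $2^{2n}\approx u^{2/t}$, polynomially larger than the $u^{1/t}$ in the statement; writing $2^{n}=u^{O(1/t)}$ hides exactly the constant in the exponent that the bound is about, so the step ``substituting \dots yields the claimed space'' fails as written. The repair is the paper's choice: $\lambda=\lceil\log t\rceil+1$ and $n=\lceil(\log u)/2t\rceil+1$, i.e.\ roughly $2t$ characters of roughly $(\log u)/(2t)$ bits each, which gives $2^{2n}=O(u^{1/t})$ while still ensuring $n2^{\lambda}\geq\log u$ and $2^{-2n+1}\leq u^{-1/t}$; the evaluation time is unaffected because $2^{\lambda}$ remains $\Theta(t)$. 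You flagged this bookkeeping as the delicate step yourself, but the parameters you actually wrote down do not carry it out.

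A smaller remark: your substitution $2^{\kappa}=\Theta(k)$ also cannot literally produce the single factor $k$ in the stated space, since the exponent contains $2\kappa$ and Theorem~\ref{thm:majorityrecursion} therefore contributes $\Theta(k^{2})$; this looseness is shared with the paper itself (Table~\ref{tab:results} lists $k^{2}u^{1/t}t^{2}$ for this corollary), so it is not a defect specific to your write-up, but the factor of two in the exponent of $n$ discussed above is, and it is the part you need to fix.
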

\begin{proof}
Apply Theorem \ref{thm:majorityrecursion} with parameters $\lambda = \lceil \log t \rceil + 1$, $n = \lceil (\log u) / 2t \rceil + 1$, and $\kappa = \lceil \log k \rceil$.
This gives is a function $\Gamma$ that is $k$-unique over $[u]$ with probability at least $1 - u^{-1/t}$.
The family $\mathcal{F}$ is defined by the composition of $\Gamma$ with a suitable simple tabulation function following the approach of Lemma \ref{lem:expanderhashing}.
\end{proof}
\subsection{Balancing time and space}
Theorem \ref{thm:simple} yielded a $k$-unique function over $\bit{n}^{c}$ with an evaluation time of about $O(c^{2})$ while using linear space in $k$.
Theorem \ref{thm:majorityrecursion} resulted in an evaluation time of about~$O(c \log c)$, using quadratic space in $k$.
Under a mild restriction on $k$, the two techniques can be combined to obtain an evaluation time of $O(c \log c)$ and linear space in $k$.
We take the construction from Theorem \ref{thm:majorityrecursion} as our starting point, 
but instead of tabulating the character tables of $h_{1}, \dots, h_{\lambda}$ we replace them with more space efficient $k$-independent functions that we construct using Theorem \ref{thm:simple}.

\begin{theorem} \label{thm:linear}
There exists a randomized data structure that takes as input positive integers $\lambda$, $n$, $\kappa = O(n)$ 
and initializes a function $\Gamma : \bit{n}^{2^{\lambda}} \to \bit{n + \kappa + 4}^{2^{\lambda + 3}}$.
In the word RAM model with word size $w$ the data structure satisfies the following:
\begin{itemize}
\item[--] The space usage is $O(2^{n + \kappa + 2\lambda}n/w)$.  
\item[--] The evaluation time of $\Gamma$ is $O(2^{\lambda}(\lambda + 2^{\lambda}n/w))$.
\item[--] With probability at least $1 - 2^{-n + 1}$ we have that $\Gamma$ is $2^{\kappa}$-majority-unique.
\end{itemize}
\end{theorem}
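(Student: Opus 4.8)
The plan is to follow the construction of Theorem~\ref{thm:majorityrecursion} verbatim, but to substitute for the dense simple tabulation functions $h_i : \bit{2m}^{2^{i+2}} \to \bit{m}^{2^{i+3}}$ the space-efficient $2^\kappa$-independent functions obtained from Theorem~\ref{thm:simple}. Concretely, set $m = n + \kappa + 4$. Recall from Lemma~\ref{lem:expanderhashing} that to make $\Gamma_i = h_i \circ (\Gamma_{i-1} \conc \Gamma_{i-1})$ a $k$-independent function (and hence, via Lemma~\ref{lem:expanderparameters}, $k$-majority-unique with the stated failure probability) all that is required of $h_i$ is that it be a \emph{simple tabulation function}, i.e.\ a component-wise XOR of character tables that are each $2^\kappa$-independent. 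But Theorem~\ref{thm:simple} produces exactly a $2^\kappa$-independent function $\bit{n'}^{c'} \to \bit{n' + \kappa + 1}^{4c'}$; composing it with a final simple tabulation to reshape the output alphabet and arity (as in the proof of Corollary~\ref{cor:simple}) gives a $2^\kappa$-independent function with the output shape we need. Plugging such a function into the role of each $h_i$ still yields a simple tabulation function at the outermost layer — or more precisely, the correctness argument of Lemmas~\ref{lem:expanderhashing} and~\ref{lem:expanderparameters} goes through as long as $h_i$ restricted to a single output coordinate is $2^\kappa$-independent on its $2^{i+2}$-character input, which is what Theorem~\ref{thm:simple} provides. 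So the probabilistic analysis is unchanged: conditional on $\Gamma_{i-1}$ being $k$-majority-unique, $(\Gamma_{i-1} \conc \Gamma_{i-1})$ is $k$-unique by Lemma~\ref{lem:interleaving}, its composition with $h_i$ is $k$-independent, and Lemma~\ref{lem:expanderparameters} makes $\Gamma_i$ $k$-majority-unique except with probability $2^{-2^{i+1}n}$; a union bound over $i = 0, \dots, \lambda$ gives total failure probability at most $2^{-2n+1} \le 2^{-n+1}$, and one checks the slightly weaker bound $1 - 2^{-n+1}$ claimed is implied.

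The real content is the accounting of space and time. For the $i$th level, $h_i$ must be a $2^\kappa$-independent function on domain $\bit{2m}^{2^{i+2}}$ with output in $\bit{m}^{2^{i+3}}$. I invoke Theorem~\ref{thm:simple} with character length $n' = 2m = 2(n + \kappa + 4) = O(n)$ (here I use the hypothesis $\kappa = O(n)$), character count $c' = 2^{i+2}$, and independence parameter $\kappa$. Theorem~\ref{thm:simple} then gives space $O(2^{2n' + \kappa} (c')^3 (n' + \kappa)/w) = O(2^{O(n) + \kappa} 2^{3i} n / w)$ per level. Summing a geometric-type series over $i \le \lambda$, the top level $i = \lambda$ dominates and contributes $O(2^{O(n) + \kappa} 2^{3\lambda} n / w)$; I need this to simplify to the claimed $O(2^{n + \kappa + 2\lambda} n/w)$. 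This is the point that requires care — the naive bound from Theorem~\ref{thm:simple} has $2^{2n'} = 2^{4m}$ and $(c')^3 = 2^{3\lambda}$, which is \emph{larger} than the target. The resolution is that when applying Theorem~\ref{thm:simple} to build $h_i$, one does not take the output character count to be literally $2^{i+3}$ via the $4c'$ in that theorem; instead one treats $h_i$ itself as the $(i{+}2)$-character simple tabulation whose \emph{own} character tables are built recursively, so that the relevant $n'$ for the inner application is $O(n)$ but the cubic-in-$c$ overhead is shared across the outer recursion rather than multiplied in at every level. In other words I expect the bookkeeping to require rephrasing the whole construction as a single simple tabulation function with $2^\lambda$ characters of length $n$, whose character tables are $k$-independent functions produced by Theorem~\ref{thm:simple} on $O(1)$ characters, so that the $c^3$ factor collapses to $O(1)$ and only the $2^{n+\kappa}$ table-size factor and the $2^{2\lambda}$ from having $2^\lambda$ tables each of width $2^\lambda$ words survive.

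For the evaluation time: the recursion $\Gamma_i = h_i \circ (\Gamma_{i-1} \conc \Gamma_{i-1})$ costs $T(i) \le 2T(i-1) + (\text{cost of } h_i)$, and with $h_i$ now evaluated via Theorem~\ref{thm:simple} at cost $O((c')^2 + (c')^3 n'/w) = O(2^{2i} + 2^{3i} n/w)$, I solve the recurrence; the $2^{2i}$ term sums to $O(2^{2\lambda}) = O((2^\lambda)^2)$, which is worse than the target $O(2^\lambda \lambda)$. Again this forces the reorganization described above: the character tables of the outer $2^\lambda$-character simple tabulation are evaluated in $O(1)$ time each (they are $O(1)$-character tabulations), a single evaluation of the outer $h$ costs $O(2^\lambda(1 + 2^\lambda n/w))$ for the XOR of $2^\lambda$ words of width $2^\lambda n/w$, and the divide-and-conquer tree over the $\lambda$ levels contributes the extra $\lambda$ factor, giving $O(2^\lambda(\lambda + 2^\lambda n/w))$ as claimed. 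The main obstacle, then, is not any probabilistic subtlety but getting the recursive construction organized so that the polynomial-in-$c$ overheads of Theorem~\ref{thm:simple} are incurred only at the scale of the \emph{small} inner instances (constant character count) rather than at every level of the $\lambda$-deep outer recursion; once the construction is set up that way, all three bullet points follow by the routine geometric-series bookkeeping sketched above.
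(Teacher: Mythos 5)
Your eventual plan---keep the recursion of Theorem~\ref{thm:majorityrecursion} with each $h_i$ still a simple tabulation function on $2^{i+2}$ characters, and make only its \emph{character tables} $h_{i,j} : \bit{2m} \to \bit{m}^{2^{i+3}}$ space-efficient via Theorem~\ref{thm:simple} applied to $O(1)$ characters---is exactly the paper's route, and your final space/time accounting matches the paper's. But as written there are genuine gaps. First, the probability bound is never actually established: each character table is now $2^{\kappa}$-independent only \emph{conditioned on} the inner Theorem~\ref{thm:simple} function being $2^{\kappa}$-unique, and that inner function has its own failure probability (about $2^{-n}$ when its domain $\bit{2m}$ is embedded into $\hat{c}=O(1)$ characters of $\lceil n/2\rceil$ bits---this embedding is also precisely where the hypothesis $\kappa = O(n)$ is needed, which you never invoke). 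Your union bound covers only the levels $\Gamma_1,\dots,\Gamma_\lambda$ and yields $2^{-2n+1}$; the weakening to $2^{-n+1}$ in the statement is not slack to be ``checked'' but exactly the room needed to absorb the inner failures. The paper resolves this by constructing a \emph{single} function $\Upsilon$ from Theorem~\ref{thm:simple}, shared by all character tables via $h_{i,j} = g_{i,j}\circ\Upsilon$ with $g_{i,j}$ a tabulated simple tabulation that reshapes the output to $\bit{m}^{2^{i+3}}$, so only one extra failure event of probability at most $2^{-n-1}$ is added, giving $2^{-2n+1}+2^{-n-1} < 2^{-n+1}$; if you instead instantiate a fresh inner function per character table you must union over $\Theta(2^{\lambda})$ of them and verify that this still fits under $2^{-n+1}$.

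Second, the reorganization that carries all the quantitative weight is only gestured at, and its description is off: ``rephrasing the whole construction as a single simple tabulation function with $2^{\lambda}$ characters of length $n$'' is not the construction (a one-level simple tabulation with such parameters would not be $2^{\kappa}$-majority-unique); the construction remains the $\lambda$-level recursion, and the only change is the implementation of the $h_{i,j}$ as $g_{i,j}\circ\Upsilon$. Relatedly, your claimed sufficient condition---that $h_i$ restricted to a single output coordinate be $2^{\kappa}$-independent---is too weak: the argument of Lemma~\ref{lem:expanderparameters} computes $\Pr[\Gamma_i(S)\subseteq B]$ as a product over output positions and needs $h_i\circ(\Gamma_{i-1}\conc\Gamma_{i-1})$ to be $2^{\kappa}$-independent jointly over all $2^{i+3}$ output coordinates; this comes for free when $h_i$ keeps its simple tabulation structure with character tables that are $2^{\kappa}$-independent into $\bit{m}^{2^{i+3}}$, which is a further reason not to replace the $h_i$ wholesale. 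Once you fix the construction as $h_{i,j}=g_{i,j}\circ\Upsilon$ with the shared $\Upsilon$ and add its failure term, the rest of your bookkeeping (space dominated by the $O(2^{\lambda})$ reshaping tables at level $\lambda$, each with $O(2^{n+\kappa})$ entries of $O(2^{\lambda}n/w)$ words; evaluation time only a constant factor above Theorem~\ref{thm:majorityrecursion} because $\Upsilon$ has $O(1)$ output characters) is correct and completes the proof.
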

\begin{proof}
At the top level, the recursion underlying $\Gamma$ takes the same form as in Theorem \ref{thm:majorityrecursion}.
\begin{equation*}
\Gamma_{i} = h_{i} \circ (\Gamma_{i-1} \conc \Gamma_{i-1}).
\end{equation*}
The functions $h_{i} : \bit{2m}^{2^{i+2}} \to \bit{m}^{2^{i+3}}$ are simple tabulation functions with $m = n + \kappa + 4$.
Each $h_{i}$ is constructed from~$2^{i+2}$ character tables $h_{i,j} : \bit{2m} \to \bit{m}^{2^{i+3}}$.
Theorem~\ref{thm:majorityrecursion} only assumes that the character tables $h_{i,j}$ are $k$-independent functions.
We will apply Theorem~\ref{thm:simple} to construct a function~$\Upsilon$ that we for each character table $h_{i,j}$ compose with a simple tabulation function $g_{i,j}$ in order to construct $h_{i,j}$.
By the restriction that $\kappa = O(n)$ we have that $m = O(n)$.
We set the parameters of~$\Upsilon$ to $\hat{c} = O(1)$, $\hat{n} = \lceil n/2 \rceil$, $\hat{\kappa} = \kappa$ such that $\bit{2m}$ can be embedded in $\bit{\hat{n}}^{\hat{c}}$.
Furthermore,~$\Upsilon$ uses~$O(2^{n + \kappa}n/w)$ words of space, can be evaluated in $O(1)$ operations, and is $k$-unique with probability at last ${1 - 2^{-n-1}}$.
Because $\Upsilon$ has $O(1)$ output characters, the time to evaluate~$h_{i,j} = g_{i,j} \circ \Upsilon$ is no more than a constant times the word length of the output of $h_{i,j}$.
The time to evaluate $\Gamma$ therefore only increases by a constant factor compared to the evaluation time in Theorem~\ref{thm:majorityrecursion}.

The probability of failure of $\Gamma$ to be $k$-majority-unique is the same as in Theorem \ref{thm:majorityrecursion}, provided that $\Upsilon$ does not fail to be $k$-unique.
This gives a total probability of failure of less than $2^{-2n + 1} + 2^{-n -1} < 2^{-n + 1}$.

We only store a single $\Upsilon$ and the character tables of $g_{i,j}$ that we use to simulate the character tables $h_{i,j}$.
From the parameters of $\Upsilon$ we have that $g_{i,j}$ uses $O(1)$ character tables with $O(2^{n+\kappa})$ entries of $O(2^{i}n/w)$ words.
The space usage is dominated by the $O(2^{\lambda})$ character tables of $h_{\lambda}$ that use space $O(2^{n + \kappa + 2\lambda}n/w)$ in total.
\end{proof}

\begin{corollary}
There exists a randomized data structure that takes as input positive integers $u$, $r = u^{O(1)}$, $t$, $k = u^{O(1/t)}$ and selects a family of functions $\mathcal{F}$ from $[u]$ to $[r]$. 
In the word RAM model with word length $w$ the data structure satisfies the following:
\begin{itemize}
\item[--] The space used to represent $\mathcal{F}$, as well as a function $f \in \mathcal{F}$, is $O(ku^{1/t}t(\log u)/w)$.
\item[--] The evaluation time of $f$ is $O(t \log t + t(\log u)/w)$.
\item[--] With probability at least $1 - u^{-1/t}$ we have that $\mathcal{F}$ is a $k$-independent family.
\end{itemize}
\end{corollary}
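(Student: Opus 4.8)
The plan is to mirror the two preceding corollaries: feed Theorem~\ref{thm:linear} parameters calibrated to $u$, $k$, and $t$, obtain a $k$-majority-unique (hence $k$-unique) function over a domain covering $[u]$, and then turn it into a $k$-independent family via Lemma~\ref{lem:expanderhashing}. The only genuinely new point relative to the corollary after Theorem~\ref{thm:majorityrecursion} is that Theorem~\ref{thm:linear} stores the character tables of the $h_i$ through the space-efficient construction of Theorem~\ref{thm:simple}, so its space depends only linearly on $2^{\kappa}$; this is what lets us hit the $ku^{1/t}$ space bound instead of $k^2 u^{1/t}$, at the cost of the restriction $k = u^{O(1/t)}$.

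Concretely, I would set $\lambda = \lceil \log t\rceil$, $n = \lceil (\log u)/t\rceil + 1$, and $\kappa = \lceil \log k\rceil$, and invoke Theorem~\ref{thm:linear}. Three conditions have to be checked. \emph{Coverage:} since $2^{\lambda} \ge t$ and $n \ge (\log u)/t$, the domain satisfies $n\,2^{\lambda} \ge \log u$, so $[u]$ embeds into $\bit{n}^{2^{\lambda}}$. \emph{The hypothesis $\kappa = O(n)$:} this is the single point at which the restriction $k = u^{O(1/t)}$ is used, as it yields $\log k = O((\log u)/t) = O(n)$. \emph{Success probability:} Theorem~\ref{thm:linear} makes $\Gamma$ $2^{\kappa}$-majority-unique, hence $2^{\kappa}$-unique, with probability at least $1 - 2^{-n+1}$, and $n \ge (\log u)/t + 1$ gives $2^{-n+1} \le u^{-1/t}$. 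The upshot is a function $\Gamma : \bit{n}^{2^{\lambda}} \to \bit{n+\kappa+4}^{2^{\lambda+3}}$ that, restricted to $[u]$, is $k$-unique with probability at least $1 - u^{-1/t}$.

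Next, following Lemma~\ref{lem:expanderhashing}, I would sample a simple tabulation function $h : \bit{n+\kappa+4}^{2^{\lambda+3}} \to [r]$ and let $\mathcal{F}$ be the family of functions $h \circ \Gamma$ over the random choices of the character tables of $h$; the lemma gives $k$-independence of $\mathcal{F}$ whenever $\Gamma$ is $k$-unique, which is the third bullet. Finally I would read off the resources. Plugging $2^{n} = O(u^{1/t})$, $2^{\kappa} = O(k)$, $2^{2\lambda} = O(t^{2})$, and $n = O((\log u)/t)$ into Theorem~\ref{thm:linear} gives space $O(2^{n+\kappa+2\lambda}n/w) = O(k\,u^{1/t}\,t(\log u)/w)$ and evaluation time $O(2^{\lambda}(\lambda + 2^{\lambda}n/w)) = O(t\log t + t(\log u)/w)$. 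The tabulation function $h$ adds $2^{\lambda+3} = O(t)$ tables of $O(ku^{1/t})$ entries from $[r]$, each stored in $O((\log u)/w)$ words since $r = u^{O(1)}$, contributing $O(ku^{1/t}t(\log u)/w)$ space and $O(t + t(\log u)/w)$ time, both dominated by the bounds for $\Gamma$. This yields the first two bullets.

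Beyond this bookkeeping there is no real obstacle; the one point to watch is that Theorem~\ref{thm:linear} carries the weaker guarantee $1 - 2^{-n+1}$ (versus $1 - 2^{-2n+1}$ in Theorem~\ref{thm:majorityrecursion}), which forces $n \approx (\log u)/t$ rather than $(\log u)/2t$. This does \emph{not} worsen the space: the larger $2^{n}$ is exactly compensated by the fact that a larger $2^{\lambda}$ is no longer needed to cover $[u]$, so the product $2^{n}\,2^{2\lambda}\,n$ stays at order $u^{1/t}\,t\log u$, and, crucially, $2^{\kappa}$ still enters only linearly because the character tables of the $h_{i}$ are themselves stored via the space-efficient construction of Theorem~\ref{thm:simple}.
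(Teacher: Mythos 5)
Your proposal is correct and follows essentially the same route as the paper's own proof: it applies Theorem~\ref{thm:linear} with exactly the parameters $\lambda = \lceil \log t\rceil$, $n = \lceil (\log u)/t\rceil + 1$, $\kappa = \lceil \log k\rceil$, and then composes with a simple tabulation function via Lemma~\ref{lem:expanderhashing}. The additional checks you spell out (coverage of $[u]$, the use of $k = u^{O(1/t)}$ to ensure $\kappa = O(n)$, and the resource accounting) are consistent with the paper, which leaves them implicit.
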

\begin{proof}
Apply Theorem \ref{thm:linear} with parameters $\lambda = \lceil \log t \rceil$, $n = \lceil (\log u) / t \rceil + 1$, and $\kappa = \lceil \log k \rceil$.
This gives is a function $\Gamma$ that is $k$-unique over $[u]$ with probability at least $1 - u^{-1/t}$.
The family $\mathcal{F}$ is defined by the composition of $\Gamma$ with a suitable simple tabulation function following the approach of Lemma \ref{lem:expanderhashing}.
\end{proof}
\subsection{An improvement for space close to $k$}\label{sec:close-to-k-improvement}
In this section we present a different space efficient version of the divide-and-conquer recursion.
The new recursion is based on an extension of the ideas behind the graph product from Lemma \ref{lem:interleaving}.
In Lemma \ref{lem:interleaving} we use expansion properties over subsets of size $k$ and concatenate the output characters of $\Gamma$, resulting in an output domain of size at least $k^{2}$.
By using stronger expansion properties and modifying our graph concatenation product to fit the structure of the key set, we are able to reduce the space usage at the cost of using more time.
We now introduce a property that follows from stronger edge expansion.

\begin{definition} \label{def:k-super-majority-unique}
Let $\Gamma : \bit{n}^{c} \to \bit{m}^{d}$ be a function satisfying the following property:   
\begin{equation*}
\forall S \subseteq \bit{n}^{c}, |S| \leq k, \exists A \subseteq S, |A| > |S|/2 : \forall x \in A : |\Gamma(\{x\}) \setminus \Gamma(S \setminus \{ x \})| > d/2.
\end{equation*}
Then we say that $\Gamma$ is \emph{$k$-super-majority-unique}.
\end{definition}

The following lemma shows how we can construct a $k$-unique function over $U^{2}$ from a set of $k$-super-majority-unique functions over $U$.
For a bit string $x$ we will use the notation $x[m]$ to denote the $m$-bit prefix of $x 0^m$, i.e., a zero-padded $m$-bit prefix of $x$.

\begin{lemma} \label{lem:superconcatenation}
Let $q$ be a positive integer. 
For $j = 1, 2, \dots, q$ let $\Gamma_{j} : \bit{\kappa}^{c} \to \bit{m_{j}}^{d}$ be $\min(2k^{j/q}, k)$-super-majority-unique and set $m = \max_{j}(m_{j} + m_{q-j+1})$.
Then the function $\Gamma : \bit{\kappa}^{c} \times \bit{\kappa}^{c} \to \bit{m}^{dq}$ defined by
\begin{equation}
\Gamma(x_{1}, x_{2})_{(j-1)q + l} = (\Gamma_{j}(x_{1})_{l} \conc \Gamma_{q-j+1}(x_{2})_{l})[m] \textnormal{ for } (j,l) \in \{1, \dots, q\} \times \{1,\dots, d\}
\end{equation}
is $k$-unique.
\end{lemma}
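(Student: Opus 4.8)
The plan is to mimic the proof of Lemma~\ref{lem:interleaving}, but to exploit the stronger $k$-super-majority-uniqueness hypothesis to make a \emph{counting} argument survive the split into first and second components. Fix $S \subseteq \bit{\kappa}^{c} \times \bit{\kappa}^{c}$ with $|S| \le k$; I must produce a key $(x_1,x_2) \in S$ with $|\Gamma(\{(x_1,x_2)\}) \setminus \Gamma(S \setminus \{(x_1,x_2)\})| > 0$. As before, write $S_{1,a} = \{x \in S \mid x_1 = a\}$ and $\pi_j(S)$ for the set of $j$th components. The key inclusion--exclusion bound from \eqref{eq:interleavingbound} still applies to each block of coordinates: if for some coordinate pair $(j,l)$ the output $\Gamma(\{x\})_{(j-1)q+l}$ is not hit by $\Gamma(S \setminus S_{1,x_1})$ in that coordinate, and also not hit by $\Gamma(S_{1,x_1}\setminus\{x\})$ in that coordinate, then it is a unique neighbor of $x$ in $\Gamma(S)$. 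So it suffices to show that some $x \in S$ has a $\Gamma_j$-block that is ``unique against $S\setminus S_{1,x_1}$'' and a $\Gamma_{q-j+1}$-block that is ``unique against $S_{1,x_1}\setminus\{x\}$'' in a common output coordinate~$l$ — here the majority fraction $d/2$ of unique neighbors on each side guarantees an overlapping coordinate $l$ by pigeonhole, exactly as in Lemma~\ref{lem:interleaving}.

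The new ingredient is choosing the \emph{level} $j$ so that both sub-universes are small enough for the relevant super-majority-uniqueness to apply. First pick the first component: apply $k$-super-majority-uniqueness of $\Gamma_q$ (the $\min(2k,k)=k$ level, valid since $|\pi_1(S)| \le |S| \le k$) to $\pi_1(S)$ to obtain a set $A_1 \subseteq \pi_1(S)$ with $|A_1| > |\pi_1(S)|/2$ such that every $a \in A_1$ has $>d/2$ unique $\Gamma_q$-neighbors against $\pi_1(S)\setminus\{a\}$, hence every $x$ with $x_1 = a$ has $>d/2$ unique $\Gamma_q$-neighbors against $\Gamma(S\setminus S_{1,x_1})$. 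The point of having the whole ladder $\Gamma_1,\dots,\Gamma_q$ — rather than a single function — is that we do not know in advance how the mass of $S$ is distributed across the fibers $S_{1,a}$. Instead, I would argue: partition the at least $|S|/2$ elements of $S$ lying in fibers over $A_1$ by the \emph{size class} of their fiber, i.e.\ say $x$ is in class $j$ if $k^{(j-1)/q} < |S_{1,x_1}| \le k^{j/q}$ (with the top class $j=q$ absorbing fibers of size up to $k$). By averaging, some class $j^\star$ contains at least an $\Omega(1)$-fraction (more carefully, at least $|S|/(2q)$, but we only need nonemptiness of a suitable fiber) of those elements; pick $x_1 \in A_1$ realizing a nonempty fiber $S_{1,x_1}$ in class $j^\star$. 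Then $|S_{1,x_1}| \le k^{j^\star/q} \le 2k^{(q-j^\star+1)/q}$ (the exponents satisfy $j^\star/q + (q-j^\star+1)/q = 1 + 1/q$, so $k^{j^\star/q} = k^{1+1/q}/k^{(q-j^\star+1)/q} \le \dots$ — I will need $k^{1/q}$ to be absorbed by the factor $2$, using $\min(2k^{(q-j^\star+1)/q},k) \ge \min(k^{j^\star/q},k) = |S_{1,x_1}|$ at the appropriate level), so $\Gamma_{q-j^\star+1}$ is super-majority-unique on sets of size $|S_{1,x_1}|$. Now apply $|S_{1,x_1}|$-super-majority-uniqueness of $\Gamma_{q-j^\star+1}$ to the set $\pi_2(S_{1,x_1})$ to get $A_2$ with $|A_2| > |\pi_2(S_{1,x_1})|/2$, and pick $x_2 \in A_2$ (also ensuring $(x_1,x_2) \in S$, which is possible since a majority of the second components survive). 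Then $x = (x_1,x_2)$ has $>d/2$ unique $\Gamma_{j^\star}$-neighbors against $\Gamma(S_{1,x_1}\setminus\{x\})$ in the $\Gamma_{j^\star}$-block and $>d/2$ unique $\Gamma_{q-j^\star+1}$-neighbors against $\Gamma(S\setminus S_{1,x_1})$ in the symmetric block, and by pigeonhole these blocks share a coordinate $l$ giving a globally unique neighbor; the zero-padding to $\bit{m}$ is harmless since it is an injective relabelling on each coordinate. This yields $k$-uniqueness with $l = 0$.

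The main obstacle I anticipate is the bookkeeping around the size-class / exponent inequality: making sure that for the chosen $j^\star$ both $\Gamma_{j^\star}$ and $\Gamma_{q-j^\star+1}$ are super-majority-unique on sets of the sizes actually arising ($|S_{1,x_1}|$ on the second-component side, and $|S|$ — or rather the number of distinct first components, $\le k$ — on the first-component side), which forces the precise form $\min(2k^{j/q},k)$ rather than $\min(k^{j/q},k)$. The factor $2$ is exactly what lets $k^{1/q}$ (the ``rounding error'' from the geometric size classes) be absorbed, so I would be careful to track it and to handle the boundary class $j=q$ separately. A secondary point is ensuring the fiber $S_{1,x_1}$ chosen in class $j^\star$ can be taken with $x_1 \in A_1$: since $|A_1| > |\pi_1(S)|/2$ the fibers over $A_1$ together contain strictly more than half of $S$ only if fibers were equal-sized, which is false in general — so instead I should choose $x_1$ to maximize, or at least make nonempty, a fiber in the heaviest class among those over $A_1$, and argue such a class exists because the total mass over $A_1$ is positive (indeed the union of fibers over $A_1$ is nonempty, which already suffices to pick some class $j^\star$ and some $x_1 \in A_1$ with $S_{1,x_1}$ in that class). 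I expect everything else — the inclusion--exclusion bound, the pigeonhole on output coordinates, and the padding argument — to be essentially identical to Lemma~\ref{lem:interleaving}.
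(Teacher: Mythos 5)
There is a genuine gap, and it lies exactly where you anticipated trouble: how the level $j$ of the ladder is chosen. In the paper's argument the level is dictated by the number of \emph{distinct first components}: pick $j$ with $k^{(j-1)/q} \leq |\pi_{1}(S)| \leq k^{j/q}$, apply the super-majority-uniqueness of $\Gamma_{j}$ to $\pi_{1}(S)$, and then use the ``more than half'' guarantee for an \emph{averaging} step: there are more than $k^{(j-1)/q}/2$ good first components, the fibers over them are disjoint and have total mass at most $k$, so some good $x_{1}$ has $|S_{1,x_{1}}| \leq \min(2k^{(q-j+1)/q},k)$, which is precisely what $\Gamma_{q-j+1}$ can handle on $\pi_{2}(S_{1,x_{1}})$. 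The factor $2$ in $\min(2k^{j/q},k)$ is the loss from this ``more than half'' averaging, not a cushion for rounding between geometric size classes. Your scheme instead fixes $j^{\star}$ from the \emph{fiber-size class} and then asks $\Gamma_{q-j^{\star}+1}$ to handle a fiber of size up to $k^{j^{\star}/q}$; the required inequality $k^{j^{\star}/q} \leq \min(2k^{(q-j^{\star}+1)/q},k)$ fails whenever $j^{\star}$ exceeds roughly $(q+1)/2$, and your fallback inequality $\min(2k^{(q-j^{\star}+1)/q},k) \geq \min(k^{j^{\star}/q},k)$ is false for the same range. A concrete counterexample: let $S$ be a single fiber of size $k$ over one first component. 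Then $j^{\star}=q$ and your argument asks $\Gamma_{1}$, which is only guaranteed $\min(2k^{1/q},k)$-super-majority-unique, to certify a set of size $k$; the paper's choice handles this case in block $j=1$, where the fiber is hashed by $\Gamma_{q}$.

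A second, independent problem is a block mismatch. You establish first-component uniqueness via $\Gamma_{q}$, which lives only in block $q$ (at coordinate $(j-1)q+l$ the first half is $\Gamma_{j}(x_{1})_{l}$, so block $j^{\star}$ hashes the first component by $\Gamma_{j^{\star}}$, not $\Gamma_{q}$), while you establish second-component uniqueness via $\Gamma_{q-j^{\star}+1}$, which lives in block $j^{\star}$. The pigeonhole over the $d$ coordinates only combines two ``more than $d/2$'' sets \emph{within the same block}; properties in different blocks never share a coordinate, so unless $j^{\star}=q$ your two facts cannot be merged into a single unique output coordinate. Replacing $\Gamma_{q}$ by $\Gamma_{j^{\star}}$ is not an option either, since $|\pi_{1}(S)|$ may far exceed $\min(2k^{j^{\star}/q},k)$ when the fibers are small. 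Both defects disappear once the level is chosen from $|\pi_{1}(S)|$ as above; the rest of your outline (the inclusion--exclusion bound from Lemma~\ref{lem:interleaving}, the pigeonhole within a block, and the harmlessness of the zero-padding) matches the paper's proof.
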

\begin{proof}
Consider a set of keys $S \subseteq \bit{n}^{c} \times \bit{n}^{c}$ with $|S| \leq k$.
We will show that there exists an index $j \in \{ 1, \dots, q \}$ and a key $x = (x_{1},x_{2}) \in S$ 
such that $x$ has a unique neighbor with respect to $S$ and $\Gamma_{j} \conc \Gamma_{q-i+1}$.
Consider the set of first components of the set of keys $\pi_{1}(S)$. 
For some $j \in \{1, \dots, q\}$ we must have that $k^{(j-1)/q} \leq |\pi_{1}(S)| \leq k^{j/q}$.
By the super-majority-uniqueness properties of $\Gamma_{j}$ there must exist more than $k^{(j-1)/q}/2$ first components $x_{1} \in \pi_{1}(S)$ such that $\Gamma_{j}(x_{1})$ has more than $d/2$ unique neighbors with respect to $\pi_{1}(S)$.
Furthermore, because $|S| \leq k$, there exists at least one such $x_{1}$ that is a component of at most $\min(2k^{(q-j+1)/q}, k)$ keys.
Following a similar argument to the proof of Lemma \ref{lem:interleaving}, by the majority-uniqueness properties of $\Gamma_{q-j+1}$ there exists $x_{2} \in S_{1,x_{1}}$ such that we get a unique neighbor. 
\end{proof}
In the following lemma we use a single $k$-independent function to represent a set of $k$-super-majority-unique functions such that the concatenated product of these functions is $k$-unique.
The proof of the lemma is omitted since it follows from using the approach of Lemma \ref{lem:expanderparameters} to obtain expansion $|\Gamma(S)| > (7/8)d|S|$, 
and then applying Lemma \ref{lem:superconcatenation} to obtain the $k$-uniqueness property. 
\begin{lemma} \label{lem:superexpanderparameters}
For every choice of positive integers $c$, $q$, $\kappa$, let~$f : \bit{\kappa}^{c} \to \bit{2\kappa + 12}^{16cq}$ be a~${2^{\kappa}}$-independent function.
For $j = 1, \dots, q$ define $\Gamma_{j} : \bit{\kappa}^{c} \to \bit{\lceil ((j + 1)/q)\kappa \rceil + 12}^{16cq}$ by
\begin{equation}
\Gamma_{j}(x)_{l} = f(x)_{l}[\lceil ((j + 1)/q)\kappa \rceil + 12] \textnormal{ for } l \in \{1, \dots, 16cq \}.
\end{equation}
Let $m = \lceil (1 + 3/q)\kappa \rceil + 26$. 
Then the function $\Gamma : \bit{\kappa}^{c} \times \bit{\kappa}^{c} \to \bit{m}^{16cq^{2}}$ defined by
\begin{equation}
\Gamma(x_{1}, x_{2})_{(j-1)q + l} =
(\Gamma_{j}(x_{1})_{l} \conc \Gamma_{q-j+1}(x_{2})_{l})[m] \textnormal{ for } (j,l) \in \{1, \dots, q\} \times \{1,\dots, 16cq\}
\end{equation}
is $k$-unique with probability at least $1 - 2^{-2c\kappa}$ .
\end{lemma}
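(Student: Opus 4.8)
The plan is to combine two ingredients that are already available: a probabilistic bound showing that a $k$-independent function has strong edge expansion (a $(7/8)$-version of Lemma~\ref{lem:expanderparameters}), and the deterministic graph-concatenation product of Lemma~\ref{lem:superconcatenation}. First I would observe that the function $f : \bit{\kappa}^{c} \to \bit{2\kappa+12}^{16cq}$ has the same form as the functions analyzed in Lemma~\ref{lem:expanderparameters}, with $n = \kappa$, $m = 2\kappa + 12$, and $d = 16cq \geq 8c$. Rerunning the union-bound calculation of Lemma~\ref{lem:expanderparameters}, but now asking for $|f(S)| > (7/8)d|S|$ for every $S$ with $|S|\leq k$ (i.e.\ charging each $x\in S$ against a ``bad'' set $B$ of size $(7/8)d|S|$), the failure probability is
\begin{equation*}
\sum_{i=2}^{k}\binom{2^{c\kappa}}{i}\binom{d2^{m}}{(7/8)di}\left(\frac{(7/8)di}{d2^{m}}\right)^{di},
\end{equation*}
and with $m = 2\kappa + 12$ and $d = 16cq$ this is at most $2^{-2c\kappa}$ by the same estimates used in that lemma (the extra slack in $m$ relative to $n+\kappa+4$ absorbs the weaker $7/8$ constant). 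So with probability at least $1 - 2^{-2c\kappa}$, the event ``$|f(S)| > (7/8)d|S|$ for all $|S|\leq k$'' holds; condition on this event for the rest of the argument.

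Next I would turn edge expansion into super-majority-uniqueness, and crucially I would need it not just for $f$ but for each truncation $\Gamma_j(x)_l = f(x)_l[\lceil((j+1)/q)\kappa\rceil + 12]$. The point is that truncating the $m$-bit output coordinates to a prefix can only merge output vertices, hence can only shrink $|\Gamma_j(S)|$; but since $\Gamma_j$ has the same number $d = 16cq$ of coordinates and each coordinate now lands in a range of size $2^{\lceil((j+1)/q)\kappa\rceil+12}$, I would redo the expansion bound restricted to sets $S$ of size at most $s_j := \min(2k^{j/q},k)$: for such small $S$ the union bound over pairs $(S,B)$ with $|B| = (7/8)d|S|$ still gives failure probability well below $2^{-2c\kappa}/q$, because the binomial $\binom{d2^{m_j}}{(7/8)d|S|}$ is controlled by $|S|\leq s_j$ while $2^{m_j} \geq k^{(j+1)/q}2^{12} \geq s_j \cdot 2^{11}$, so the ratio $(7/8)d|S|/(d2^{m_j})$ is at most $(7/16)\cdot 2^{-11}$, tiny. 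Taking a union bound over the $q$ choices of $j$ (and folding it into the single $2^{-2c\kappa}$ bound by adjusting constants — this is why the statement tolerates only one clean term), I get that simultaneously each $\Gamma_j$ satisfies $|\Gamma_j(S)| > (7/8)d|S|$ for all $|S|\leq s_j$. A standard pigeonhole argument (as in the first paragraph of the proof of Lemma~\ref{lem:expanderparameters}) converts $|\Gamma_j(S)| > (7/8)d|S|$ into: more than $|S|/2$ of the keys $x\in S$ have more than $d/2$ unique neighbors — that is, $\Gamma_j$ is $s_j$-super-majority-unique in the sense of Definition~\ref{def:k-super-majority-unique}.

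With each $\Gamma_j$ established as $\min(2k^{j/q},k)$-super-majority-unique, I would finish by invoking Lemma~\ref{lem:superconcatenation} with this family $\Gamma_1,\dots,\Gamma_q$, whose output lengths are $m_j = \lceil((j+1)/q)\kappa\rceil + 12$, so that $m_j + m_{q-j+1} = \lceil((j+1)/q)\kappa\rceil + \lceil((q-j+2)/q)\kappa\rceil + 24 \leq \lceil(1+3/q)\kappa\rceil + 26$, matching the stated $m$ (the $+1$ accounting for combining two ceilings). Lemma~\ref{lem:superconcatenation} then says exactly that the $\Gamma$ defined by $\Gamma(x_1,x_2)_{(j-1)q+l} = (\Gamma_j(x_1)_l \conc \Gamma_{q-j+1}(x_2)_l)[m]$ is $k$-unique, which is the conclusion. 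I expect the main obstacle to be the bookkeeping in the second paragraph: verifying that the single probability term $2^{-2c\kappa}$ genuinely dominates the union over all $q$ truncations $\Gamma_j$ \emph{and} the original expansion event, which requires checking that for each $j$ the range size $2^{m_j}$ is large enough relative to the size threshold $s_j = \min(2k^{j/q},k)$ for the geometric factor $\big((7/8)d|S|/(d2^{m_j})\big)^{d|S|}$ to crush the two binomial coefficients with room to spare. This is why the excerpt states that the proof is omitted — it is a direct but tedious reprise of the Lemma~\ref{lem:expanderparameters} computation with the constant $3/4$ replaced by $7/8$ and an extra layer of truncation parameters.
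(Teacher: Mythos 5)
Your proposal follows exactly the route the paper indicates (the paper omits the details, sketching precisely this: use the union-bound technique of Lemma~\ref{lem:expanderparameters} to get expansion $|\Gamma_j(S)| > (7/8)d|S|$, convert this by pigeonhole into the super-majority-uniqueness of Definition~\ref{def:k-super-majority-unique} for each truncation at its size threshold $\min(2k^{j/q},k)$, and then invoke Lemma~\ref{lem:superconcatenation}), and your bookkeeping — the range $2^{m_j}\geq 2^{11}s_j$ per coordinate, degree $16cq$, and the union over the $q$ truncations folding into the single $2^{-2c\kappa}$ term — checks out. Apart from immaterial constant-factor slack in the informal estimates, this is correct and essentially the paper's own argument.
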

\noindent We remind the reader that the notation $x[m]$ is used to denote the zero-padded $m$-bit prefix of $x$.
Taking the prefix of the concatenated output characters of $\Gamma_{j}$ and $\Gamma_{q-j+1}$ is done with the sole purpose of padding the output characters of $\Gamma$ to uniform length.

We now define a randomized recursive construction of a $k$-unique function similar to the one in Theorem \ref{thm:majorityrecursion}.
The parameters of the data structure are $\lambda$, $\kappa$, and $q$.
The parameters $\lambda$ and $\kappa$ determine the size of the universe and the desired $k$-uniqueness.
The parameter $q$ controls the space-time tradeoff of the character tables used in the recursion.
At the outer level of the recursion, for $i = 1, \dots, \lambda$, we repeatedly square the size of the domain, 
constructing $k$-unique functions of the form $\Gamma_{i} : \bit{\kappa}^{2^{i}} \to \bit{2\kappa + 26}^{144 \cdot 2^{i}}$.	
At level $i$ of the recursion, we obtain a $k$-independent function by composing $\Gamma_{i}$ with a simple tabulation function $h_{i+1} : \bit{2\kappa + 26}^{144 \cdot 2^{i}} \to \bit{2 \kappa + 12}^{48 \cdot 2^{i+1}}$.
The output of this function is then used to construct $\Gamma_{i+1}$, following the approach of Lemma \ref{lem:superexpanderparameters} with the parameter $q$ set to $3$.
For $i = 1, 2, \dots, \lambda$ the recursion is described by the following set of equations
\begin{equation}
\begin{aligned}
\Gamma_{i}(x_{1}, x_{2})_{(j-1)48 \cdot 2^{i} + l} &= (\Gamma_{i,j}(x_{1})_{l} \conc \Gamma_{i,4-j}(x_{2})_{l})[2\kappa + 26]\\
\Gamma_{i,j}(x_{s})_{l} &= h_{i}(\Gamma_{i-1}(x_{s}))_{l}[\lceil ((j + 1)/3)\kappa \rceil + 12]\\
\Gamma_{0}(x_{s})_{l} &= \id^{(48)}(x_{s})_{l}[2\kappa + 26]
\end{aligned}
\end{equation}
where the indices are $j \in \{1,2,3\}$, $l \in \{1, \dots, 48 \cdot 2^{i} \}$, and $s \in \{1, 2\}$.
We have defined $\Gamma_{0}$ by simply repeating the input $48$ times, padded to length $2\kappa + 26$, to ensure that it fits into the recursion. 
In practice we only require $h_{1} \circ \Gamma_{0}$ be be $k$-independent over domain $\bit{\kappa}$.  

To further reduce the space usage we apply the technique from Lemma \ref{lem:superexpanderparameters} to implement the character tables of $h_{i}$.
Each character table has domain $\bit{2\kappa + 26}$. 
We view this domain as consisting of two characters of length $\kappa' = \kappa + 13$.
We apply Lemma \ref{lem:superexpanderparameters} with parameters $c = 1$, $q$, and $\kappa = \kappa'$ to construct a function 
$\Upsilon : \bit{\kappa'}^{2} \to \bit{\lceil (1 + 3/q)\kappa' \rceil + 26}^{16q^{2}}$ that is $k$-unique with probability at least $1 - 2^{-2\kappa'}$. 
To facilitate fast evaluation we tabulate the $k$-independent function $f' : \bit{\kappa'} \to \bit{\lceil (1 + 1/q)\kappa' \rceil + 12}^{16q}$ used to construct $\Upsilon$.
The $j$th character table of $h_{i}$ is constructed by composing $\Upsilon$ with an appropriate simple tabulation function,
\begin{equation}
h_{i,j} = \Upsilon \circ g_{i,j},
\end{equation}
where $g_{i,j} : \bit{\lceil (1 + 1/q)\kappa' \rceil + 12}^{16q} \to \bit{2\kappa + 12}^{48 \cdot 2^{i}}$ is tabulated.

\begin{theorem} \label{thm:prefixrecursion}
There exists a randomized data structure that takes as input positive integers $\lambda$, $\kappa$, $q$, and initializes a function $\Gamma : \bit{\kappa}^{2^{\lambda}} \to \bit{2\kappa + 26}^{48 \cdot 2^{\lambda}}$.
In the word RAM model with word length $w$ the data structure satisifes the following:
\begin{itemize}
\item[--] The space usage is $O(2^{(1 + 3/q)\kappa + 2\lambda} q^{2} \kappa /w)$. 
\item[--] The evaluation time of $\Gamma$ is $O(2^{\lambda} q^{2} (\lambda + 2^{\lambda}\kappa/w))$.
\item[--] With probability at least $1 - 2^{-2(\kappa - 1)}$ we have that $\Gamma$ is $2^{\kappa}$-unique.
\end{itemize}
\end{theorem}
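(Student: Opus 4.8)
The plan is to mirror the proof of Theorem~\ref{thm:majorityrecursion}, but accounting for the extra factor of $q^2$ in width that comes from the super-majority product of Lemma~\ref{lem:superexpanderparameters} and the nested use of $\Upsilon$ to tabulate the character tables of the $h_i$. We set $q=3$ at the outer level (as fixed in the construction) so that $(1+3/q)\kappa = 2\kappa$ and the output characters of $\Gamma$ have length $2\kappa+26$, matching the claimed codomain $\bit{2\kappa+26}^{48\cdot 2^\lambda}$. First I would verify the widths propagate correctly: $\Gamma_{i-1}$ has $48\cdot 2^{i-1}$ output characters of length $2\kappa+26$; composing with the simple tabulation function $h_i$ of the stated type yields $48\cdot 2^{i}$ characters of length $2\kappa+12$; applying Lemma~\ref{lem:superexpanderparameters} with $c=2^{i-1}$ substituted appropriately and $q=3$ produces $\Gamma_i$ with $16\cdot(2^{i-1})\cdot 9 \cdot 2 = 144\cdot 2^{i}$... and reconciling this with the stated $48\cdot 2^i$ is exactly the kind of bookkeeping the construction's index conventions ($(j-1)48\cdot 2^i + l$ with $j\in\{1,2,3\}$, $l\in\{1,\dots,48\cdot 2^i\}$) are designed to handle, giving $3\cdot 48\cdot 2^i = 144\cdot 2^i$ edges total.

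Next, for the space bound, I would count: we store one copy of $f'$ of size $O(2^{\kappa'}q\kappa/w)$ (negligible), and for each of the $O(2^\lambda)$ levels the tabulated simple tabulation functions $g_{i,j}$, each with $O(q)$ character tables of $O(2^{(1+1/q)\kappa'})=O(2^\kappa \cdot 2^{O(\kappa/q)})$ entries of $O(2^i\kappa/w)$ words. Summing the geometric series in $i$, the dominant term is level $\lambda$, giving $O(2^{(1+3/q)\kappa}\cdot q \cdot 2^{\lambda}\cdot 2^\lambda \kappa / w) = O(2^{(1+3/q)\kappa+2\lambda}q^2\kappa/w)$ after absorbing the number of distinct $g_{i,j}$ (which is $O(q 2^\lambda)$ at the top level); I would track the $q$ versus $q^2$ carefully here, since the extra $q$ comes from the number of character tables and another factor from $q$ being the super-majority parameter feeding the prefix lengths. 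For evaluation time, I would set up the recurrence $T(i) \le 2T(i-1) + O(2^i q^2(1 + 2^i\kappa/w))$: the $2T(i-1)$ is the two recursive calls inside $\Gamma_{i-1}\conc\Gamma_{i-1}$, the $O(2^i q^2)$ counts the lookups into the $g_{i,j}$ tables (there are $O(q)$ of them, each producing $O(2^i)$ output characters, and evaluating $\Upsilon$ on each costs another factor $O(q)$), and the $2^i\kappa/w$ term handles word-level XOR costs; this solves to $O(2^\lambda q^2(\lambda + 2^\lambda\kappa/w))$ just as in Theorem~\ref{thm:majorityrecursion}.

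Finally, for the failure probability, I would argue inductively exactly as before: conditioned on $\Gamma_{i-1}$ being $k$-unique (which is all that is needed for $h_i\circ\Gamma_{i-1}$ to be $k$-independent by Lemma~\ref{lem:expanderhashing}), Lemma~\ref{lem:superexpanderparameters} (with $c\approx 2^{i-1}$, $q=3$) gives that $\Gamma_i$ fails to be $k$-unique with probability at most $2^{-2c\kappa} = 2^{-2^{i}\kappa}$; additionally each inner $\Upsilon$ used in the $h_{i,j}$ fails with probability at most $2^{-2\kappa'} < 2^{-2\kappa}$, and there are $O(q 2^\lambda) < 2^{\kappa}$ of them for reasonable parameters, so a union bound over everything gives total failure probability at most $\sum_{i=1}^{\lambda} 2^{-2^i\kappa} + (\text{inner terms}) \le 2^{-2\kappa+1} + 2^{-2\kappa+1} \le 2^{-2(\kappa-1)}$, matching the claim. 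The main obstacle I anticipate is not any single step but the index/width bookkeeping: making sure the prefix lengths $\lceil((j+1)/q)\kappa\rceil + 12$ chosen in Lemma~\ref{lem:superexpanderparameters} line up with $q=3$ to give output length exactly $2\kappa+26$, that the $m_j + m_{q-j+1}$ maximum in Lemma~\ref{lem:superconcatenation} is attained and equals $2\kappa+26$ for all $j\in\{1,2,3\}$, and that the number of edges $144\cdot 2^i$ is consistent throughout the recursion so that $d\ge 8c$ and the $(7/8)$-expansion hypothesis of Lemma~\ref{lem:superexpanderparameters} are actually satisfied at every level. Once those constants are pinned down, the three bullet points follow by the same three arguments (direct counting, recurrence solution, inductive union bound) used in Theorem~\ref{thm:majorityrecursion}.
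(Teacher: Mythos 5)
Your proposal follows the paper's own skeleton (the same outer recursion, the same time recurrence $T(i)\le 2T(i-1)+O(2^{i}q^{2}(1+2^{i}\kappa/w))$, and the same union-bound structure over levels), but there is a genuine gap in the space accounting, which is precisely where the theorem's characteristic factors $q^{2}$ and $2^{(1+3/q)\kappa}$ must come from. You charge each $g_{i,j}$ only $O(q)$ character tables with domain $O(2^{(1+1/q)\kappa'})$, i.e.\ you implicitly treat $g_{i,j}$ as a simple tabulation applied to the output characters of $f'$, and then arrive at the stated bound only via an unexplained ``absorbing'' step: your intermediate quantities multiply out to roughly $O(2^{(1+1/q)\kappa+2\lambda}q\,\kappa/w)$ with the wrong exponent, and the number of $g_{i,j}$ at level $i$ is $O(2^{i})$ (one per character table of $h_{i}$), not $O(q2^{i})$. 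The correct structure is $h_{i,j}=g_{i,j}\circ\Upsilon$ with $g_{i,j}$ a simple tabulation over the $16q^{2}$ output characters of $\Upsilon$; each such character is a padded prefix of $\Gamma_{j}(x_{1})_{l}\conc\Gamma_{q-j+1}(x_{2})_{l}$, whose two prefix lengths sum to about $\big((j+1)+(q-j+2)\big)\kappa'/q=(1+3/q)\kappa'$ bits, so each $g_{i,j}$ has $O(q^{2})$ character tables of domain $O(2^{(1+3/q)\kappa})$ with entries of $O(2^{i}\kappa/w)$ words; summing $O(2^{i})$ such functions over the levels gives $O(2^{(1+3/q)\kappa+2\lambda}q^{2}\kappa/w)$. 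This is not merely a bookkeeping slip: a tabulation applied directly to the un-concatenated characters of $f'(x_{1})$ and $f'(x_{2})$ is not $k$-independent (on a product set $A\times B$ every key shares all of its first-half and second-half lookups with other keys, so there are no unique neighbors and Lemma~\ref{lem:expanderhashing} does not apply), and the character tables $h_{i,j}$ being $k$-independent is what your inductive uniqueness argument relies on.

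Two smaller points. Your failure-probability argument union bounds over $O(q2^{\lambda})$ ``inner $\Upsilon$'s'' and needs the side condition $O(q2^{\lambda})<2^{\kappa}$, a restriction the theorem does not make; this is also inconsistent with your own space count, which (correctly) stores a single shared $f'$, hence a single $\Upsilon$ contributing a single $2^{-2\kappa'}$ term, after which the paper's bound $2^{-2\kappa'}+\sum_{i=1}^{\lambda}2^{-2^{i}\kappa}<2^{-2(\kappa-1)}$ is immediate. The degree tension you flag ($144\cdot 2^{i}$ from the index conventions versus $48\cdot 2^{\lambda}$ in the statement) is a constant-factor discrepancy present in the paper itself and does not affect the asymptotic claims; your handling of it, and your time and per-level failure analyses, otherwise match the paper's proof.
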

\begin{proof}
The total space usage is dominated by the simple tabulation functions used to implement the character tables of $h_{\lambda}$.
There are $O(2^{\lambda})$ simple tabulation functions $g_{i,j}$. 
Each of these has $O(q^{2})$ character tables with a domain of size $O(2^{(1 + 3/q)\kappa})$ that map to bit strings of length $O(2^{\lambda}\kappa)$.
This gives a total space usage of $O(2^{(1 + 3/q)\kappa + 2\lambda)} q^{2} \kappa /w)$.

Let $T(i)$ denote the evaluation time of $\Gamma_{i}$.
For $i = 1$ we can evaluate $\Gamma_{1}$ by performing a constant number of lookups into $h_{0}$ and combine prefixes of the output in $O(1)$ time.
For $i > 1$ evaluating $\Gamma_{i}$ takes two evaluations of $\Gamma_{i-1}$ and an additional amount of work combining prefixes that is only a constant factor greater than the time required to read the output of $h_{i} \circ \Gamma_{i-1}$.
Evaluating $h_{i}$ is performed by $O(2^{i})$ evaluations of character tables of the form $g_{i,j} \circ \Upsilon$.
The degree of $\Upsilon$ is $O(q^{2})$ and it has an evaluation time that is proportional to the degree.
We therefore perform $O(q^{2})$ lookups into the character tables of $g_{i,j}$ where we read bit strings of length $O(2^{i}\kappa)$.
The recurrence describing the evaluation time of $\Gamma_{i}$ takes the form
\begin{equation*}
T(i) \leq 
\begin{cases} 2T(i-1) + O(2^{i}q^{2}(1 + 2^{i}\kappa/w)) & \mbox{if } i > 1 \\ 
O(1) & \mbox{if } i = 1. 
\end{cases}
\end{equation*}
The solution to the recurrence is $O(2^{i}q^{2}(i + 2^{i}(n + \kappa)/w))$.

The construction fails if $\Upsilon$ fails to be $k$-unique or if $\Gamma_{1}, \dots, \Gamma_{\lambda}$ fails to be $k$-unique. 
According to Lemma \ref{lem:superexpanderparameters} this happens with probability less than $2^{-2\kappa'} + \sum_{i=1}^{\lambda}2^{-2^{i}\kappa} < 2^{-2(\kappa - 1)}$   
\end{proof}

\begin{corollary}\label{cor:prefixcorollary}
There exists a randomized data structure that takes as input positive integers $u$, $r = u^{O(1)}$, $t$, $k$ and selects a family of functions $\mathcal{F}$ from $[u]$ to $[r]$. 
In the word RAM model with word length $w$ the data structure satisfies the following:
\begin{itemize}
\item[--] The space used to represent $\mathcal{F}$, as well as a function $f \in \mathcal{F}$, is $O(ku^{1/t}t^{2}\log(k)/w)$.
\item[--] The evaluation time of $f$ is $O(t^{2} (\log(k)/\log u)(\log(\log(u)/\log k) + \log(u)/w))$.
\item[--] With probability at least $1 - k^{-2}$ we have that $\mathcal{F}$ is a $k$-independent family.
\end{itemize}
\end{corollary}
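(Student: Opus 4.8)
The plan is to instantiate Theorem~\ref{thm:prefixrecursion} with carefully chosen parameters and then compose with a simple tabulation function following Lemma~\ref{lem:expanderhashing}, exactly as in the proofs of the preceding corollaries. First I would set $\lambda = \lceil \log t \rceil$ so that $2^{\lambda} = \Theta(t)$, and $\kappa = \lceil \log k \rceil$ so that $2^{\kappa} = \Theta(k)$; this makes $\Gamma$ a function on $\bit{\kappa}^{2^\lambda}$, whose domain has size $2^{\kappa 2^\lambda} = k^{\Theta(t)}$, which I want to be at least $u$, i.e.\ I need $\kappa \cdot 2^\lambda \ge \log u$, equivalently $t\log k = \Omega(\log u)$. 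To get the clean statement I would instead take $\lambda$ so that $2^\lambda = \Theta(t \log u / (t \log k)) = \Theta(\log u / \log k)$ — note the corollary's time bound contains exactly the factor $\log(\log(u)/\log k)$, which is $\log(2^\lambda)=\Theta(\lambda)$, so $\lambda = \Theta(\log(\log u/\log k))$ and the effective recursion depth is that small. The remaining "width" $c$ of characters handled per leaf is absorbed into the base case. The parameter $q$ controlling the space–time tradeoff of the character tables should be chosen so that the $u^{3/(q\cdot\text{something})}$-type slack in the space bound $O(2^{(1+3/q)\kappa + 2\lambda}q^2\kappa/w)$ becomes $u^{o(1)}$ or folds into the $u^{1/t}$ term; taking $q = \Theta(t)$ or $q=\Theta(\log u/\log k)$ makes $2^{3\kappa/q}$ into $k^{O(1/t)}=u^{O(1/t^2)}$ or similar, contributing the stated $t^2$ factor (from $q^2$) to the space.

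Next I would translate the bounds. From Theorem~\ref{thm:prefixrecursion} the space is $O(2^{(1+3/q)\kappa+2\lambda}q^2\kappa/w)$; substituting $2^\kappa=\Theta(k)$, $2^\lambda=\Theta(\log u/\log k)$ so $2^{2\lambda}=\Theta((\log u/\log k)^2)$, $q=\Theta(t)$, and $2^{3\kappa/q}=k^{O(1/t)}=u^{O(1/t)}$ (using $k\le u^{O(1)}$, or more simply bounding $k^{1/t}\le u^{1/t}$ when $k\le u$), and $\kappa=\Theta(\log k)$, gives $O(k \cdot u^{O(1/t)} \cdot t^2 \cdot (\log u/\log k)^2 \cdot \log k/w)$. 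Here I need to argue the $(\log u/\log k)^2$ factor together with the handling of the full domain of size $u$ collapses: since the recursion only covers $\bit{\kappa}^{2^\lambda}$ natively, to cover all of $[u]$ I apply the construction with a slightly larger leaf width $c$, which is where the remaining $u^{1/t}$ genuinely enters — precisely as in Corollary~3's use of Theorem~\ref{thm:linear}. Collecting terms I should land on $O(k u^{1/t} t^2 \log(k)/w)$. For the time bound, Theorem~\ref{thm:prefixrecursion} gives $O(2^\lambda q^2(\lambda + 2^\lambda\kappa/w))$; substituting yields $O((\log u/\log k)\, t^2\,(\lambda + (\log u/\log k)\log k/w))$, and since $\lambda=\Theta(\log(\log u/\log k))$ and $(\log u/\log k)\log k=\log u$, this is $O(t^2(\log u/\log k)(\log(\log u/\log k)+\log u /w))$, matching the statement after noting that what the paper writes multiplicatively as $t^2(\log k/\log u)(\cdots)$ versus $t^2(\log u/\log k)(\cdots)$ depends on how the $2^\lambda$ factor is allocated — I would reconcile this by choosing $2^\lambda=\Theta(t\log k/\log u)\cdot t$ inside the recursion so that the remaining factor is the reciprocal, exactly as dictated by needing the domain product $\kappa 2^\lambda$ over the $c$ copies to equal $\log u$.

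Then the error probability: Theorem~\ref{thm:prefixrecursion} gives failure at most $2^{-2(\kappa-1)} = O(k^{-2})$, which is already exactly the claimed $1 - k^{-2}$, so no amplification or reparametrization is needed there — a pleasant simplification compared to the $u^{-1/t}$ bounds of the other corollaries. Finally, to pass from a $k$-unique $\Gamma$ to a $k$-independent family $\mathcal{F}$ from $[u]$ to $[r]$, I compose $\Gamma$ with a simple tabulation function $h:\bit{2\kappa+26}^{48\cdot 2^\lambda}\to[r]$ as in Lemma~\ref{lem:expanderhashing}; since $r=u^{O(1)}$, each character table of $h$ fits in $O(1)$ words and $h$ contributes $O(2^\lambda)$ lookups, which is dominated by the evaluation time of $\Gamma$, and its storage $O(2^{2\kappa+2\lambda}\cdot)$ is dominated by the space for $\Gamma$ — so both bounds are unchanged up to constants.

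The main obstacle I anticipate is the bookkeeping that makes the domain exactly $[u]$: Theorem~\ref{thm:prefixrecursion} natively produces a function on $\bit{\kappa}^{2^\lambda}$, and matching this to an arbitrary $u$ forces a joint choice of $\lambda$, $\kappa$, and an outer character-width $c$ (embedding $[u]\hookrightarrow \bit{\kappa}^{c2^\lambda}$), with $c$ being the lever that produces the genuine $u^{1/t}$ space factor; getting the three parameters to simultaneously yield $t^2$ (not $t^3$ or $t$) in the space, $t^2\log k/\log u$ in the leading time factor, and $k^{-2}$ in the error — while keeping $k$ fully unrestricted (unlike Corollary~3) — is the delicate part. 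Everything else is routine substitution into bounds already proved.
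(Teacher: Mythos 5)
There is a genuine gap, and it sits exactly at the one place where this corollary differs from the earlier ones: the choice of $q$. The paper instantiates Theorem~\ref{thm:prefixrecursion} with $\lambda = \lceil \log(\log(u)/\log k)\rceil + 1$, $\kappa = \lceil \log k\rceil + 1$ (so the domain $\bit{\kappa}^{2^{\lambda}}$ already covers $[u]$ --- no extra ``leaf width $c$'' is needed or even available, since Theorem~\ref{thm:prefixrecursion} has no such parameter), and crucially $q = \lceil 3t\log(k)/\log u\rceil$. With this $q$ the factor $2^{3\kappa/q}$ becomes $u^{O(1/t)}$, and $q^{2} = \Theta(t^{2}(\log k/\log u)^{2})$ cancels the $2^{2\lambda} = \Theta((\log u/\log k)^{2})$ term in the space bound and turns the time bound's leading factor $2^{\lambda}q^{2}$ into $t^{2}\log(k)/\log u$. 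Your hedged choices $q = \Theta(t)$ or $q = \Theta(\log u/\log k)$ do neither: with $q=\Theta(t)$ you are left with an uncancelled $(\log u/\log k)^{2}$ in the space and a time factor $t^{2}\log(u)/\log k$ instead of $t^{2}\log(k)/\log u$, and your proposed repairs --- importing a $u^{1/t}$ factor through an outer character width ``as in Corollary~3'' or reallocating $2^{\lambda}$ --- are not levers that exist in Theorem~\ref{thm:prefixrecursion} and would break the domain-coverage constraint $\kappa\cdot 2^{\lambda}\geq\log u$ if they did. The $u^{1/t}$ in this corollary does not come from the character size (which is $\Theta(k)$ here); it comes from the $3/q$ slack in the exponent.

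A second, smaller but real gap is the final composition. Composing $\Gamma$ with a plain simple tabulation function $h:\bit{2\kappa+26}^{48\cdot 2^{\lambda}}\to[r]$ requires character tables over a domain of size $\Theta(2^{2\kappa})=\Theta(k^{2})$, and $k^{2}$ is \emph{not} dominated by the claimed space $O(ku^{1/t}t^{2}\log(k)/w)$ in the regime this section targets (large $k$, large $t$, e.g.\ $k=\sqrt{u}$, $t=\log u$). The paper avoids this by implementing $h$ itself via the function $\Upsilon$ from Lemma~\ref{lem:superexpanderparameters}, exactly as is done for the internal character tables $h_{i,j}$, which keeps the space, time, and failure probability of the composition within the stated bounds. (Your error-probability step also needs $\kappa=\lceil\log k\rceil+1$ rather than $\lceil\log k\rceil$ to get exactly $k^{-2}$ rather than $O(k^{-2})$, but that is a minor adjustment.)
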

\begin{proof}
Assume without loss of generality that $k \leq u$ and apply Theorem \ref{thm:prefixrecursion} with parameters 
$\lambda = \lceil \log(\log(u)/\log k) \rceil + 1$, $\kappa = \lceil \log k \rceil + 1$, and $q = \lceil 3t \log(k)/\log u \rceil$.
This gives is a function $\Gamma$ that is $k$-unique over $[u]$ with probability at least $1 - k^{-2}$.
We compose $\Gamma$ with a suitable simple tabulation function $h$ that maps to elements of $[r]$.
Implementing $h$ using $\Upsilon$ we get the same bounds on the space usage, evaluation time, and probability of failure as for the data structure used to represent $\Gamma$.
\end{proof}

\begin{remark}
The construction in Corollary \ref{cor:prefixcorollary} presents an improvement in the case where we wish to minimize the space usage. 
For $w = \Theta(\log u)$ and $t = \lceil \log u \rceil$ we get a space usage of $O(k \log(u) \log( k))$ and an evaluation time of $O(\log (u)\log (k)\log (\log (u) /\log (k)))$.
In comparison, for these parameters Corollary \ref{cor:simple} gives a space usage of $O(k \log^{2} u)$ and an evalution time of $O(\log^{2}(u)\log(k))$. 
\end{remark}

\section{Conclusion}
We have presented new constructions of $k$-independent hash functions that come close to Siegel's lower bound on the space-time tradeoff for such functions.
An interesting open problem is whether the gap to the lower bound can be closed.
From the perspective of efficient expanders it would be very interesting to achieve space $o(k)$ while preserving computational efficiency.
Of course, such a result is not possible via $k$-independence.

\section{Acknowledgements}
The research of Tobias Christiani and Rasmus Pagh has received funding from the European Research Council under the European Union's Seventh Framework Programme (FP7/2007-2013) / ERC grant agreement no.~614331.

Mikkel Thorup's research is partly supported by Advanced Grant DFF-0602-02499B from the Danish Council for Independent Research under the Sapere Aude research career programme.

We thank the STOC reviewers for insightful comments that helped us improve the exposition.
\bibliography{focs}
\bibliographystyle{plain}
\end{document}